\PassOptionsToPackage{table}{xcolor}
\documentclass{article} 
\usepackage{iclr2027_conference,times}
\IfFileExists{latexml.sty}{\usepackage{latexml}}{\newif\iflatexml\latexmlfalse}

\usepackage{graphicx}
\graphicspath{{./Figs/}{./}}

\usepackage{amsmath,amsfonts,bm}

\def\eqref#1{equation~\ref{#1}}

\def\1{\bm{1}}

\DeclareMathAlphabet{\mathsfit}{\encodingdefault}{\sfdefault}{m}{sl}
\SetMathAlphabet{\mathsfit}{bold}{\encodingdefault}{\sfdefault}{bx}{n}

\usepackage{wrapfig}
\usepackage{amsmath,amssymb}
\usepackage{amsthm}
\usepackage{url}
\usepackage{enumitem}
\usepackage{fontawesome5}
\usepackage{dsfont}
\usepackage[table]{xcolor}
\usepackage{booktabs}
\usepackage{multirow}
\usepackage{algorithm}
\DeclareRobustCommand{\bestmark}[1]{%
  {\setlength{\fboxsep}{1pt}\colorbox{bestcolor}{\textbf{#1}}}}
\DeclareRobustCommand{\secondmark}[1]{%
  {\setlength{\fboxsep}{1pt}\colorbox{secondcolor}{\underline{#1}}}}
\usepackage{algpseudocode}   

\definecolor{bestcolor}{RGB}{204,229,255}
\definecolor{secondcolor}{RGB}{255,230,204}

\definecolor{llmblue}{HTML}{1F5FA8}
\definecolor{physred}{HTML}{B03A2E}
\newcommand{\phase}[1]{\Statex\hspace{-\algorithmicindent}\textcolor{gray!85}{\textit{// #1}}}
\newcommand{\llm}[1]{\textcolor{llmblue}{#1}}
\newcommand{\Eval}[1]{\textcolor{physred}{\textsc{eval}$(#1)$;\ $\mathrm{NFE}\mathrel{+{=}}1$}}

\newtheorem{lemma}{Lemma}
\newtheorem{assumption}{Assumption}

\usepackage[most]{tcolorbox}
\definecolor{absbg}{HTML}{EEF3F8}      
\definecolor{corrblue}{HTML}{1F6FD0}   
\iflatexml
\newenvironment{abstractbox}{\begin{abstract}}{\end{abstract}}
\else
\newtcolorbox{abstractbox}{
  enhanced,
  colback=absbg,
  colframe=absbg,
  boxrule=0pt,
  arc=3mm,
  left=5mm, right=5mm, top=4mm, bottom=4mm,
  before skip=-4pt, after skip=12pt
}
\fi

\fancypagestyle{firstpage}{%
  \fancyhf{}%
  \lhead{\raisebox{-0.75cm}[0pt][0pt]{%
    \includegraphics[height=1.3cm]{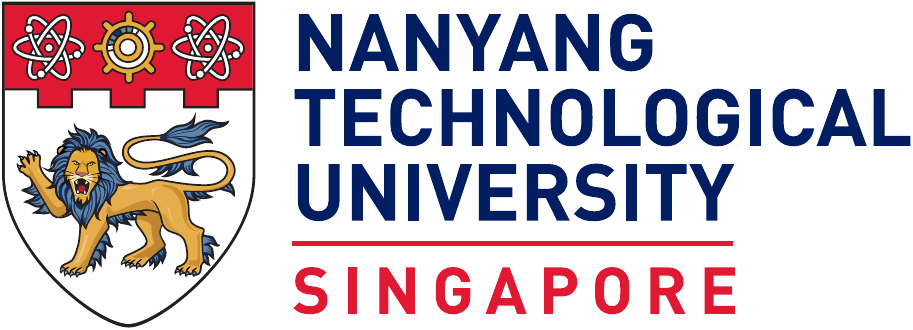}%
    \hspace{0.45cm}{\color{black!25}\rule[0.13cm]{0.5pt}{1.04cm}}\hspace{0.45cm}%
    \includegraphics[height=1.3cm]{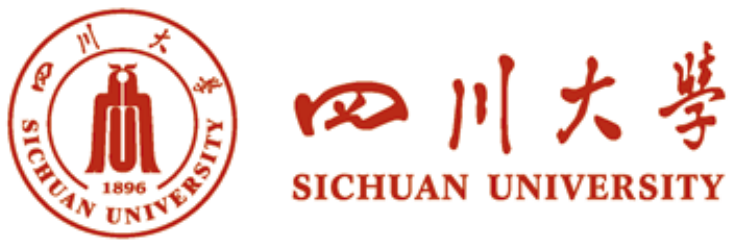}}}%
  \cfoot{\thepage}%
  \renewcommand{\headrulewidth}{0pt}%
}

\usepackage{hyperref}
\hypersetup{
    colorlinks=true,
    citecolor=red!70!black,
    linkcolor=blue!60!black,
    urlcolor=green!50!black,
    pdftitle={Vibe Building},
}

\title{Vibe Building}

\iflatexml
\author{
Yongqing Jiang$^{\dagger}$ \\ Nanyang Technological University, Singapore
\and Haoran Luo$^{*}$ \\ Nanyang Technological University, Singapore
\and Jianze Wang \\ Sichuan University, China
\and Xin Zhou \\ Nanyang Technological University, Singapore
\and Kaoshan Dai \\ Sichuan University, China
\and Zhiqi Shen$^{*}$ \\ Nanyang Technological University, Singapore
}
\else
\author{
Yongqing Jiang$^{1\dagger}$, Haoran Luo$^{1*}$, Jianze Wang$^{2}$, Xin Zhou$^{1}$, Kaoshan Dai$^{2}$ \& Zhiqi Shen$^{1*}$ \\
$^{1}$Nanyang Technological University, Singapore \quad $^{2}$Sichuan University, China
}
\fi

\iclrfinalcopy 
\begin{document}

{\setlength{\tabcolsep}{0pt}\maketitle}   
\thispagestyle{firstpage}                 
\lhead{\textsc{Vibe Building}}            
\iflatexml
\begin{center}\small
\textsuperscript{\textdagger}Project leader.\quad \textsuperscript{*}Corresponding authors.\quad
Correspondence to \href{mailto:yong.qingjiang@outlook.com}{yong.qingjiang@outlook.com}.
\end{center}
\else
{\renewcommand{\thefootnote}{}%
 \footnotetext{$^{\dagger}$Project leader.\quad $^{*}$Corresponding authors.\quad
 Correspondence to \href{mailto:yong.qingjiang@outlook.com}{\textcolor{black!80}{\texttt{yong.qingjiang@outlook.com}}}.}}
\fi

\begin{abstractbox}
Automated building design must comply with seismic and wind codes and satisfy structural mechanics constraints, yet most existing agents produce visually plausible models without verification grounded in mechanical analysis and code compliance. We introduce the Vibe Building task and propose \textbf{PE-Loop} (\textbf{P}hysics-\textbf{E}ngine-in-the-\textbf{Loop}), an agent in which a deterministic physics engine is the sole source of evaluation signals, mapping code constraints to a physics process reward, while the language model is confined to proposing discrete revisions (section menu, topology, and lateral system). Designs are verified by held-out seismic and wind time-history checks and a constructability gate. On VB-Bench, 3,577 physics-adjudicated building instances across six code families, PE-Loop achieves the highest verified success rate under three of four backbone LLMs, the highest held-out seismic pass rate under all four, and the highest held-out wind pass rate under three. Replacing the physics verdict with a language-model judge, all else fixed, leaves 58.43\% of accepted designs noncompliant. These results suggest that reliable structural design rests less on a stronger LLM proposer than on an adjudicator the proposer cannot influence, a division of labor for agents whose outputs must hold up in the physical world.

\vspace{0.8em}
{\sffamily\small
\begin{tabular}{@{}l@{\quad}l@{}}
\makebox[1.3em][l]{\color{corrblue}\faGlobe}\textbf{\color{corrblue}Project Page:} & \href{https://jovanqing.github.io/Vibe-Building/}{\textcolor{black!80}{https://jovanqing.github.io/Vibe-Building}} \\
\makebox[1.3em][l]{\color{corrblue}\faGithub}\textbf{\color{corrblue}Code:}         & \href{https://github.com/Jovanqing/Vibe-Building}{\textcolor{black!80}{https://github.com/Jovanqing/Vibe-Building}}
\end{tabular}}
\end{abstractbox}
\section{Introduction}

\begin{wrapfigure}{r}{0.5\textwidth}
\vspace{-10pt}
\centering
\includegraphics[width=0.48\textwidth]{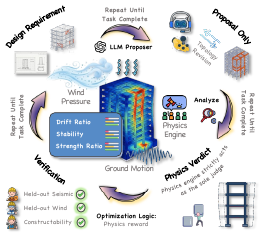}
\caption{An illustration of PE-Loop.}
\label{fig:abstract}
\vspace{-10pt}
\end{wrapfigure}

In vibe coding \citep{karpathy2025vibe}, a developer states intent in natural language and accepts whatever the model produces, as long as it runs. The criterion is permissive because errors are cheap: a faulty program can be reverted and regenerated within seconds. Buildings afford no such recourse. A structural design must satisfy seismic and wind codes and preserve a safety margin that dead loads erode over decades \citep{meng2026physics}; an oversight at the modeling stage compounds silently into irreversible failure \citep{jiang2026rethinking,chen2026post}. Automating design has long been a goal of digital engineering \citep{makoond2024arresting}. We ask what vibe coding becomes when ``runs'' is replaced by ``stands,'' and argue that the answer depends not on how a design is generated but on \emph{who is allowed to accept it}.

\begin{figure}[!t]
\centering
\includegraphics[width=0.90\textwidth]{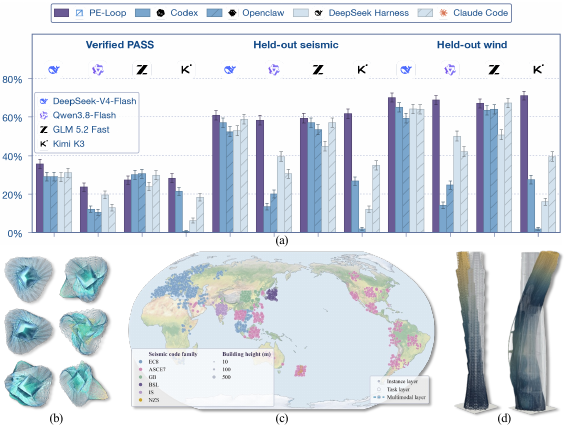}
\caption{\textbf{Swap the backbone, keep the reliability.}
\textbf{(a)}~Across four LLMs, the held-out pass rate of PE-Loop varies by 3.3\,pp (seismic) and 4.1\,pp (wind), versus 23.8 and 27.7\,pp for the best baseline per backbone; whiskers: $\pm1$ SE.
\textbf{(b)}~First six plan-view modes of a Shanghai Tower model; torsional mode~5 appears only in plan.
\textbf{(c)}~Instance locations by code family. \textbf{(d)}~Two VB-Bench instances.}
\label{fig:teaser}
\end{figure}

LLM agents now write and run finite-element code \citep{qi2025feagpt,geng2025lightweight} and automate code checking \citep{liang2025automating,madireddy2025bim} (Fig.~\ref{fig:abstract}), but the physics engine is usually passive: it executes a committed solution or verifies a known answer. Generative layout methods \citep{liao2021structgan,gu2024diffusion,radler2025diverse} let physics guide the search but cannot read a natural-language brief. Loop- and graph-engineering paradigms \citep{wang2026looparena,chaoyue2026awesome_loop_engineering,feng2026graph,he2026awesomegraphengineering} supply the orchestration for such multi-stage tasks, yet a language model still commonly renders the verdict, which is neither reliable nor auditable \citep{zheng2023judging,gu2024llmasjudge,boiko2023autonomous}, and without external feedback such a model does not reliably correct its own errors \citep{huang2024cannot,kambhampati2024llmmodulo}. Unit tests \citep{jimenez2024swebench} and verifier-driven search \citep{novikov2025alphaevolve,guo2025deepseekr1} move the verdict to a program, but against a few targets or one scalar score, not the coupled constraints a structure must satisfy at once. We therefore hold that only an unpersuadable arbiter should accept a design: a deterministic physics engine that evaluates each candidate under the code-prescribed wind \citep{gb50009} and seismic \citep{gb50011} envelope and returns a per-constraint margin that fluent but unfaithful text cannot alter.

We instantiate this position as \textbf{PE-Loop} (\textbf{P}hysics-\textbf{E}ngine-in-the-\textbf{Loop}), a generate-verify-reflect loop that terminates only when every strength and serviceability constraint holds. The physics engine holds sole acceptance authority, and its margins form a physics process reward (PPR) that couples one-dimensional section search with topology repair; the language model only proposes discrete revisions. Because the verdict never depends on the proposer, the held-out pass rate of PE-Loop is nearly invariant across four backbones, whereas the best baseline varies by more than 20\,pp (Fig.~\ref{fig:teaser}a). To evaluate it, we formalize Vibe Building, which keeps the vibe-coding interaction but moves the verdict to a physics engine, and build VB-Bench: 3,577 physics-adjudicated instances across six code families, with held-out dynamic re-checks and a constructability gate. Concurrently, \citet{luo2026verification} close a parameter-level repair loop around a finite-element verifier on 44 member-scale cases; we add topology-level repair, which they identify as the limit of such loops, held-out dynamic loading, and a counterfactual that swaps only the verdict holder, on a benchmark two orders of magnitude larger (Appendix~\ref{sec:related}).

\begin{figure}[!t]
\centering
\includegraphics[width=0.95\textwidth]{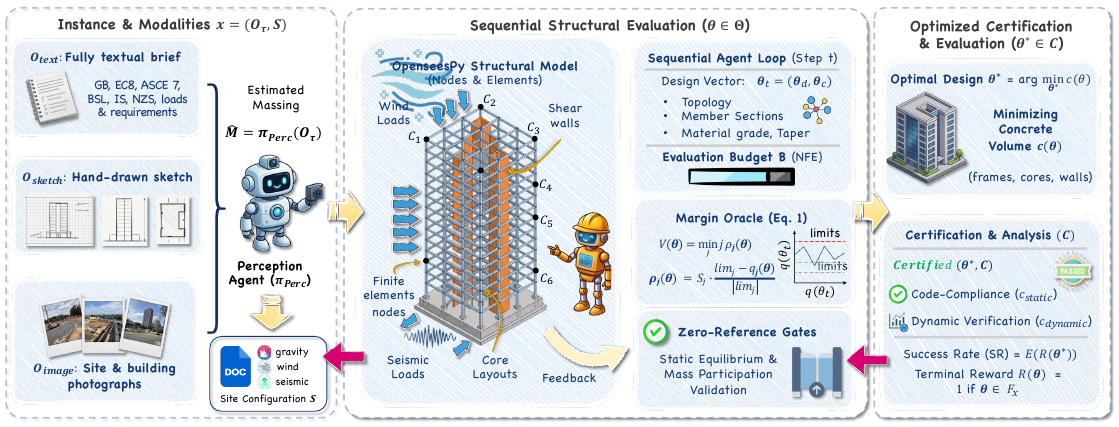}
\caption{The definition and overview of the Vibe Building task.}
\label{fig:task}
\end{figure}

Taken together, our work makes the following contributions:
\begin{itemize}[leftmargin=*,itemsep=2pt,topsep=2pt]
\item \textbf{Vibe Building.} We formalize the generation of code-compliant building designs from multimodal input, with a protocol combining in-loop compliance, held-out dynamic re-checks, and a constructability gate.
\item \textbf{VB-Bench.} We construct 3,577 physics-adjudicated instances across six code families, 1,311 gradable tasks, and a structural-code knowledge base in which every limit is traceable to its clause.
\item \textbf{PE-Loop.} We propose an agent in which a deterministic physics engine holds sole acceptance authority and its constraint-wise margins drive both section search and topology repair.
\item \textbf{Physics adjudication is decisive.} PE-Loop achieves the best verified success under three of four backbones and the best held-out seismic pass rate under all four; replacing its verdict with an LLM judge turns 58.43\% of accepted designs into false positives.
\end{itemize}

\section{Task: Vibe Building}
\label{sec:task}

\paragraph{Instances and modalities.}
An instance $x = (O_\tau, S)$ pairs an \emph{observation} $O_\tau$ of an underlying massing model $M$ with a \emph{site} $S$ that fixes the code family (GB, EC8, ASCE~7, BSL, IS, or NZS; \citealp{gb50011,en1998,asce7,bsl_japan,is1893,nzs11705}) and its gravity, wind, and seismic loading provisions. The observation takes one of three modalities, $\tau \in \{\mathrm{text}, \mathrm{sketch}, \mathrm{image}\}$: a numeric text brief, a sketch with text, or photographs alone (Fig.~\ref{fig:task}).

\paragraph{Constraints and physical margins.}
Each instance carries a constraint set $G_x = \{g_1, \dots, g_m\}$, where each $g_j$ bounds a demand quantity $q_j(\theta)$ of a candidate realization $\theta \in \Theta$ against a code-prescribed limit $\mathrm{lim}_j$; the space $\Theta=\Theta_d\times\Theta_b$ factorizes into discrete choices (lateral system, core and wall counts, core plan geometry, and a section triple of beam scale, concrete grade, and taper on a finite menu) and the square-column side $b$ on a 50-mm grid. Demand quantities are evaluated under the code-prescribed envelope of gravity, wind, and seismic load cases. Two zero-reference gates (static equilibrium; effective mass participation) reject invalid models before any code limit is consulted. Each constraint contributes a signed, normalized margin
\begin{equation}
\rho_j(\theta) = s_j \, \frac{\mathrm{lim}_j - q_j(\theta)}{|\mathrm{lim}_j|},
\qquad
V(\theta) = \min_j \rho_j(\theta),
\label{eq:margin}
\end{equation}
where $s_j \in \{\pm 1\}$ encodes the inequality direction; $\rho_j(\theta) \ge 0$ iff $g_j$ is satisfied, and $V(\theta)$ is the worst-case margin. We write $\boldsymbol{\rho}(\theta) = (\rho_1(\theta), \dots, \rho_m(\theta))$ for the full margin vector and $\boldsymbol{q}(\theta) = (q_1(\theta), \dots, q_m(\theta))$ for the demand vector, both returned to the agent at every evaluation. Each instance additionally carries a detailing set $G^{\mathrm{build}}_x$ of code clauses that bound member proportions of a realization, such as the beam clear-span-to-depth ratio; their margins follow Eq.~\ref{eq:margin} with $q_j$ a geometric ratio of $\theta$ and cost no simulator call.

\begin{figure}[!t]
\centering
\includegraphics[width=\textwidth]{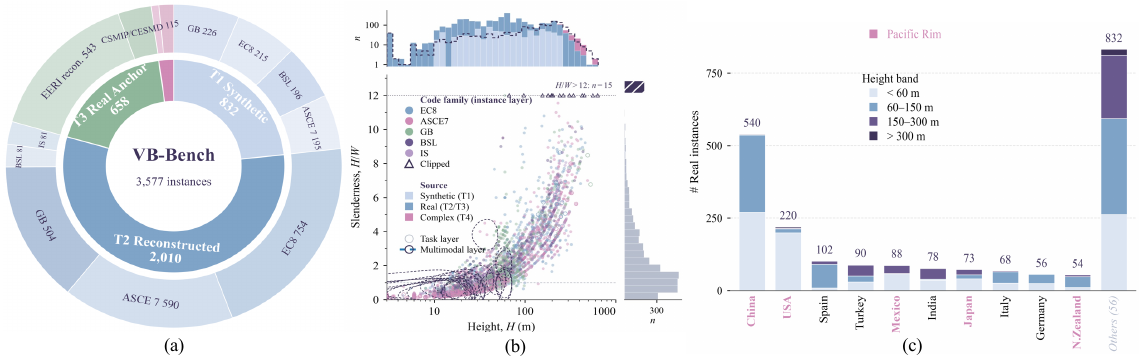}
\caption{\textbf{VB-Bench overview.} (a) Instance composition by route (inner) and code family (outer). (b) Height--slenderness: instance (filled), task (open circles), multimodal (dashed). (c) Geographic coverage (top-10 countries, Pacific Rim in pink).}
\label{fig:bench}
\end{figure}

\paragraph{Episode protocol and objective.}
An episode on instance $x$ unfolds under the margin oracle of Eq.~\ref{eq:margin}. It is a sequential interaction bounded by an evaluation budget $B$ counted in simulator calls (number of function evaluations, NFE): at step $t$ the agent selects $\theta_t \in \Theta$ and observes $\big(\boldsymbol{q}(\theta_t), \boldsymbol{\rho}(\theta_t)\big)$; the episode ends when the agent returns a final design $\theta^\star$, when the budget is exhausted before a design is returned, or when the agent declares \textsc{escalated}, a certifiable admission that the available design levers are insufficient, which is itself a scored outcome. Writing $F_x = \{\theta \in \Theta : \bigwedge_{j=1}^{m} g_j(\theta)\}$ for the feasible set, the terminal reward is sparse, $R(\theta) = \mathds{1}[\theta \in F_x]$, and the ideal return minimizes material among feasible realizations:
\begin{equation}
\theta^\star = \operatorname*{arg\,min}_{\theta \in F_x} c(\theta),
\label{eq:objective}
\end{equation}
with $c(\theta)$ the concrete volume of the realized structure, accounting for frames, cores, and walls alike.

\paragraph{Certification and multimodal grounding.}
The return of Eq.~\ref{eq:objective} is certified before it counts: the agent returns $(\theta^\star, \mathcal{C})$, where the certificate $\mathcal{C}=(\boldsymbol{\rho},\boldsymbol{\rho}^{\mathrm{L4}},\boldsymbol{\rho}^{\mathrm{build}})$ records, respectively, the code-compliance margins, a held-out dynamic verification under recorded ground motions and field-recorded or spectrum-consistent wind series that the agent never optimizes against, and compliance with the detailing set $G^{\mathrm{build}}_x$. For $\tau \in \{\mathrm{sketch}, \mathrm{image}\}$, the agent first applies a perception operator $\pi_{\mathrm{perc}}$ to its observation, yielding a massing estimate $\hat{M} = \pi_{\mathrm{perc}}(O_\tau)$, but adjudication is always performed on the ground-truth $M$, so design errors are never confounded with perception errors. The primary metric aggregates these certified returns over an evaluation set $\mathcal{X}$ of instances into the one-shot success rate
\begin{equation}
\mathrm{SR} = \mathbb{E}_{x \sim \mathcal{X}}\big[ R^{\mathrm{cert}}(\theta^\star_x) \big],
\label{eq:sr}
\end{equation}
where $\theta^\star_x$ is the design returned on instance $x$ and $R^{\mathrm{cert}}$, defined in Eq.~\ref{eq:m-l4}, is the conjunction of the three certificate components.

\section{Benchmark: VB-Bench}
\label{sec:benchmark}
VB-Bench has two layers (Fig.~\ref{fig:bench}; details in Appendix~\ref{app:bench}). The instance layer includes 3,577 physics-adjudicated buildings from four construction routes, forming a difficulty gradient: synthetic (T1, 832), literature-grade reconstructions (T2, 2,010), real anchors (T3, 658), and complex-massing towers (T4, 77). The task layer includes 1,311 gradable tasks stratified by solvability and question type (intent L1, design L2, modeling L3), with 1,197 paired variants and a 372-task quarantine set against contamination. Every sample passes expert review, physical adjudication under envelope gravity, wind, and seismic load cases with height-tiered limits, and time-history verification. Every limit, including that from detailing clause $l_n/h_b\ge 4$ of $G^{\mathrm{build}}_x$, carries a clause number traceable to a structural-code knowledge base compiled from 86 codes across six code families.

\begin{figure}[!t]
\centering
\includegraphics[width=\textwidth]{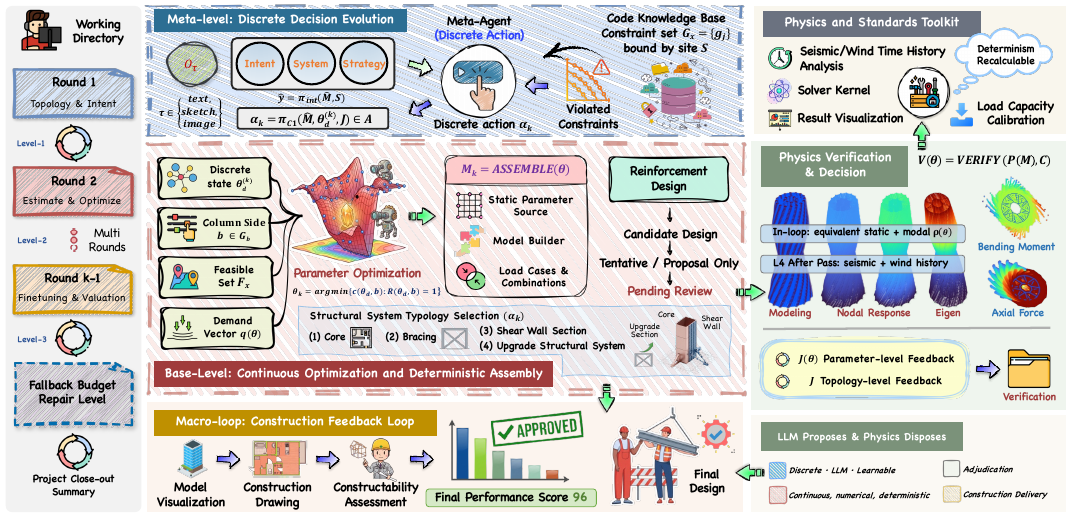}
\caption{Overview of PE-Loop.}
\label{fig:method}
\end{figure}

\section{Method: Physics-Engine-in-the-Loop}
\label{sec:method}

PE-Loop rests on one invariant: the only authority that advances a design is the deterministic margin oracle of Section~\ref{sec:task}, never the self-assessment of a language model~\citep{zheng2023judging,shinn2023reflexion}; agent outputs change the next candidate $\theta_t$, never the verdict. Constraints analytic in $\theta$ are enforced by construction, and only those needing the simulator are adjudicated. Figure~\ref{fig:method} gives the overview; pseudocode (Algorithm~\ref{alg:peloop}), proofs, and implementation details are in Appendix~\ref{app:method}.

\paragraph{Physics process signal.}
Because $\rho_j(\theta)\ge 0$ iff $g_j$ holds (Eq.~\ref{eq:margin}), feasibility and reward are functions of the margins alone, and the violated set directs repair:
\begin{equation}
R(\theta)=\mathds{1}[V(\theta)\ge 0]=\prod_{j=1}^{m}\mathds{1}[\rho_j(\theta)\ge 0],
\label{eq:m-consistency}
\end{equation}
\begin{equation}
\mathcal{J}(\theta)=\{\,j:\ \rho_j(\theta)<0\,\}.
\label{eq:m-violated}
\end{equation}
We call $\boldsymbol{\rho}$ the \emph{physics process reward} (PPR) and $R$ the verifiable outcome reward~\citep{uesato2022process,lightman2024verify,shao2024deepseekmath}; $R(\theta_t)$ is observable at every step. The loop consumes only the sign of $\boldsymbol{\rho}$; magnitudes enter only the certificate (Eq.~\ref{eq:m-l4}).

\paragraph{Design space and intent.}
The agent searches one coordinate on a grid, the square-column side $b$, and treats everything else as discrete:
\begin{equation}
\theta=(\theta_d,b),\quad
\theta_d=(n_{\mathrm{core}},n_{\mathrm{wall}},\kappa,\varsigma),\quad
b\in G_b=\{0.30+0.05\,i\ \text{m}:\ i=0,\dots,44\},
\label{eq:m-space}
\end{equation}
with core and wall counts $n_{\mathrm{core}},n_{\mathrm{wall}}$, core plan geometry $\kappa$, and section triple $\varsigma$ (beam scale, concrete grade, taper) on a finite menu (Appendix~\ref{app:intent}). For sketch and image inputs one vision-language call gives $\hat M=\pi_{\mathrm{perc}}(O_\tau)$; an intent call $\pi_{\mathrm{int}}(\hat M,S)$ then fixes the initial core and wall counts (Eq.~\ref{eq:m-intent}, Appendix~\ref{app:intent}). Neither call is charged to $B$, and adjudication stays on the true $M$.

\paragraph{Column-axis search (C2).}
For fixed $\theta_d$ write $\tilde\theta(b)=\Pi_{\mathrm{build}}(\theta_d,b)$ for the projected candidate. We adopt Assumption~\ref{ass:mono} (monotone margin along the projected column axis): $b\le b'\Rightarrow V(\tilde\theta(b))\le V(\tilde\theta(b'))$. The feasible sides then form an up-set of $G_b$, and since concrete volume is strictly increasing in $b$ (Lemmas~\ref{lem:bisect}--\ref{lem:cost}, Appendix~\ref{app:c2}),
\begin{equation}
b^\star(\theta_d)=\min\{b\in G_b:\ R(\theta_d,b)=1\}=\operatorname*{arg\,min}\{c(\theta_d,b):\ b\in G_b,\ R(\theta_d,b)=1\},
\label{eq:m-bstar}
\end{equation}
which exists iff $R(\theta_d,b_{\max})=1$. C2 finds it by an upper-bound probe, $\lceil\log_2 45\rceil=6$ bisection steps, and a confirmation, which together take at most $8$ evaluations; with one re-verification of the returned design, one round costs at most $9$. Throughout C2, $\varsigma$ stays frozen, up to the beam-scale clip of $\Pi_{\mathrm{build}}$ in the projected variant.

\paragraph{Discrete decision (C1) and outer loop.}
Round $k$ runs C2 and returns \textsc{pass} if $b^\star$ exists, or \textsc{escalated} if $k=K$ or $\mathrm{NFE}\ge B$. Otherwise, the language model picks one action from $\mathcal{A}$ at the section, topology, or system level (adjust $\varsigma$; add a core or wall; adjust $\kappa$; change the lateral system):
\begin{equation}
\theta_d^{(k+1)}=a_k\big(\theta_d^{(k)}\big),\qquad
a_k=\pi_{\mathrm{C1}}\big(\hat M,\ \theta_d^{(k)},\ \mathcal{J}_k\big)\in\mathcal{A},
\label{eq:m-outer}
\end{equation}
where $\mathcal{J}_k$ is the violated set of the last evaluation of round $k$: that of $(\theta_d^{(k)},b_{\max})$ when the probe fails, or of the failed confirmation when the budget truncates the bisection; the split follows comparative advantage: grid search beats the model on the column axis at equal budget, while the model supplies cross-instance experience for discrete repairs. We use $K=3$ (one decision per lever level) and $B=15$, shared across all compared systems. In the \texttt{ultra} tier, a material loop C3 further lowers concrete volume along $\varsigma$, accepting only feasible, strictly cheaper candidates (Eq.~\ref{eq:m-c3}, Appendix~\ref{app:c3}); the four tiers form one anytime family (Appendix~\ref{app:effort}).

\paragraph{Constructability by construction.}
The detailing clause $l_n/h_b\ge 4$ is analytic in $\theta$, so before every evaluation, a projection $\Pi_{\mathrm{build}}$ clips the beam-depth scale and caps $b_{\max}$ such that a compliant beam always exists, at no oracle cost (Eq.~\ref{eq:m-proj}, Appendix~\ref{app:build}). Assumption~\ref{ass:mono} is stated on the projected path, so Lemma~\ref{lem:bisect} applies to the capped grid verbatim, and Lemma~\ref{lem:safe} guarantees feasibility of every returned design even where the assumption fails. We evaluate $\Pi_{\mathrm{build}}$ as an ablation under one backbone (Section~\ref{sec:ablations}); the ``PE-Loop'' rows of Table~\ref{tab:m1} run without it, so that their Build column measures the detailing gap that search alone leaves open.

\paragraph{Final verification (L4) and certificate.}
The returned design undergoes one seismic response-history analysis under recorded ground motions and one wind time-history analysis, both elastic single-direction analyses (Appendix~\ref{app:method}), neither of which ever enters the loop, yielding held-out margins $\boldsymbol{\rho}^{\mathrm{L4}}$ (drift ratio; peak acceleration above 150~m). With the detailing margins $\boldsymbol{\rho}^{\mathrm{build}}$, the certificate and certified success are
\begin{equation}
\mathcal{C}=\big(\boldsymbol{\rho},\ \boldsymbol{\rho}^{\mathrm{L4}},\ \boldsymbol{\rho}^{\mathrm{build}}\big)(\theta^\star),\qquad
R^{\mathrm{cert}}=\mathds{1}\big[V\ge 0\ \wedge\ \min\boldsymbol{\rho}^{\mathrm{L4}}\ge 0\ \wedge\ \min\boldsymbol{\rho}^{\mathrm{build}}\ge 0\big](\theta^\star),
\label{eq:m-l4}
\end{equation}
and $R^{\mathrm{cert}}$ enters Eq.~\ref{eq:sr}. Swapping the in-loop verdict $R$ for an LLM judge, all else fixed, quantifies the cost of misplaced acceptance authority (Section~\ref{sec:ablations}).

\begin{table}[!t]
\centering
\caption[Five systems $\times$ four LLMs on the main split]{Five systems $\times$ four LLMs on the topology-decision instances of the main split; effort \texttt{high}. Per block, \bestmark{best} and \secondmark{second}; tied values share a mark; lower is better for NFE@p, Exh., Wall, T/O, and Rank; ESC is unranked. $\pm$: Agresti--Coull SE (SEM for NFE@p). Wall in min, P/h in passes per hour, others in \%. PASS, Seis., Wind, Build, and Verif.\ divide by the 390 topology-decision answers per arm, ESC, Exh., T/O, and Valid by all 720 answers, and pass$^3$ by the 130 instances. Definitions and denominators in Appendix~\ref{app:exp}; THA: time-history analysis.}
\label{tab:m1}
\scriptsize
\setlength{\tabcolsep}{1.5pt}
\resizebox{\textwidth}{!}{%
\begin{tabular}{lrrrrrrrrrrrrrr}\toprule
 & \multicolumn{2}{c}{\textbf{Compliance}} & \multicolumn{4}{c}{\textbf{Efficiency}} & \multicolumn{1}{c}{\textbf{Reliability}} & \multicolumn{1}{c}{\textbf{Integrity}} & \multicolumn{2}{c}{\textbf{Held-out THA}} & \multicolumn{1}{c}{\textbf{Constr.}} & \multicolumn{3}{c}{\textbf{Composite}} \\
\cmidrule(lr){2-3}\cmidrule(lr){4-7}\cmidrule(lr){8-8}\cmidrule(lr){9-9}\cmidrule(lr){10-11}\cmidrule(lr){12-12}\cmidrule(lr){13-15}
\textbf{Arm} & \textbf{PASS} & \textbf{ESC} & \textbf{NFE@p} & \textbf{Exh.} & \textbf{Wall} & \textbf{T/O} & \textbf{pass$^3$} & \textbf{Valid} & \textbf{Seis.} & \textbf{Wind} & \textbf{Build} & \textbf{Verif.} & \textbf{P/h} & \textbf{Rank} \\ \midrule
\multicolumn{15}{c}{\textbf{DeepSeek-V4-Flash}} \\
Codex & 82.05$\pm$1.95 & 7.78$\pm$1.01 & \cellcolor{secondcolor}\underline{84.90}$\pm$1.40 & 59.72$\pm$1.82 & \cellcolor{secondcolor}\underline{6.3} & 6.94$\pm$0.96 & 66.15$\pm$4.10 & 91.39$\pm$1.06 & 56.92$\pm$2.50 & \cellcolor{secondcolor}\underline{64.87}$\pm$2.41 & 50.77$\pm$2.52 & 28.97$\pm$2.29 & \cellcolor{secondcolor}\underline{7.8} & 3.04 \\
Openclaw & 73.85$\pm$2.22 & 12.78$\pm$1.25 & 86.79$\pm$1.43 & \cellcolor{secondcolor}\underline{58.19}$\pm$1.83 & 11.0 & 7.78$\pm$1.01 & 48.46$\pm$4.32 & 93.75$\pm$0.92 & 52.31$\pm$2.52 & 59.23$\pm$2.48 & 47.18$\pm$2.52 & 28.97$\pm$2.29 & 4.0 & 4.29 \\
DeepSeek Harness & 82.82$\pm$1.91 & 6.81$\pm$0.95 & 98.08$\pm$0.60 & 91.81$\pm$1.03 & 8.6 & 7.22$\pm$0.98 & \cellcolor{secondcolor}\underline{67.69}$\pm$4.06 & \cellcolor{secondcolor}\underline{94.72}$\pm$0.85 & 52.82$\pm$2.52 & 64.10$\pm$2.42 & \cellcolor{bestcolor}\textbf{54.10}$\pm$2.51 & 28.72$\pm$2.28 & 5.8 & 3.50 \\
Claude Code & \cellcolor{secondcolor}\underline{83.85}$\pm$1.87 & 5.56$\pm$0.87 & 93.99$\pm$1.12 & 83.06$\pm$1.40 & 6.9 & \cellcolor{secondcolor}\underline{6.53}$\pm$0.93 & 66.92$\pm$4.08 & 94.03$\pm$0.90 & \cellcolor{secondcolor}\underline{58.72}$\pm$2.48 & 63.85$\pm$2.42 & \cellcolor{secondcolor}\underline{52.05}$\pm$2.52 & \cellcolor{secondcolor}\underline{31.03}$\pm$2.33 & 7.3 & \cellcolor{secondcolor}\underline{2.83} \\
\textbf{PE-Loop (ours)} & \cellcolor{bestcolor}\textbf{91.28}$\pm$1.45 & 11.39$\pm$1.19 & \cellcolor{bestcolor}\textbf{62.51}$\pm$0.38 & \cellcolor{bestcolor}\textbf{2.09}$\pm$0.31 & \cellcolor{bestcolor}\textbf{1.6} & \cellcolor{bestcolor}\textbf{1.67}$\pm$0.51 & \cellcolor{bestcolor}\textbf{68.46}$\pm$4.03 & \cellcolor{bestcolor}\textbf{98.89}$\pm$0.43 & \cellcolor{bestcolor}\textbf{60.77}$\pm$2.46 & \cellcolor{bestcolor}\textbf{70.00}$\pm$2.31 & 46.41$\pm$2.51 & \cellcolor{bestcolor}\textbf{35.64}$\pm$2.42 & \cellcolor{bestcolor}\textbf{34.2} & \cellcolor{bestcolor}\textbf{1.33} \\
\midrule \multicolumn{15}{c}{\textbf{Qwen3.8-Flash}} \\
Codex & 14.10$\pm$1.77 & 34.31$\pm$1.77 & \cellcolor{bestcolor}\textbf{28.24}$\pm$5.08 & \cellcolor{secondcolor}\underline{13.47}$\pm$1.28 & \cellcolor{secondcolor}\underline{13.4} & 18.89$\pm$1.46 & 2.31$\pm$1.64 & 56.39$\pm$1.84 & 13.33$\pm$1.73 & 14.10$\pm$1.77 & 12.56$\pm$1.69 & 12.05$\pm$1.66 & 0.6 & 3.83 \\
Openclaw & 32.56$\pm$2.36 & 1.11$\pm$0.43 & 98.72$\pm$0.50 & 42.92$\pm$1.84 & 30.0 & 56.11$\pm$1.84 & 7.69$\pm$2.47 & 54.31$\pm$1.85 & 20.00$\pm$2.03 & 24.62$\pm$2.18 & 20.77$\pm$2.05 & 10.51$\pm$1.57 & 0.7 & 4.50 \\
DeepSeek Harness & \cellcolor{secondcolor}\underline{65.90}$\pm$2.39 & 3.61$\pm$0.72 & 83.88$\pm$1.47 & 30.42$\pm$1.71 & 23.2 & 27.08$\pm$1.65 & \cellcolor{secondcolor}\underline{40.00}$\pm$4.24 & \cellcolor{secondcolor}\underline{77.50}$\pm$1.56 & \cellcolor{secondcolor}\underline{39.49}$\pm$2.46 & \cellcolor{secondcolor}\underline{50.00}$\pm$2.52 & \cellcolor{bestcolor}\textbf{42.56}$\pm$2.49 & \cellcolor{secondcolor}\underline{19.49}$\pm$2.01 & 1.7 & \cellcolor{secondcolor}\underline{2.67} \\
Claude Code & 56.15$\pm$2.50 & 12.50$\pm$1.24 & 80.25$\pm$1.57 & 17.08$\pm$1.40 & 17.1 & \cellcolor{secondcolor}\underline{16.39}$\pm$1.38 & 31.54$\pm$4.03 & 68.19$\pm$1.73 & 30.51$\pm$2.32 & 42.05$\pm$2.49 & \cellcolor{secondcolor}\underline{37.44}$\pm$2.44 & 12.82$\pm$1.71 & \cellcolor{secondcolor}\underline{2.0} & 2.75 \\
\textbf{PE-Loop (ours)} & \cellcolor{bestcolor}\textbf{77.44}$\pm$2.12 & 11.67$\pm$1.20 & \cellcolor{secondcolor}\underline{62.74}$\pm$0.38 & \cellcolor{bestcolor}\textbf{5.27}$\pm$0.19 & \cellcolor{bestcolor}\textbf{4.4} & \cellcolor{bestcolor}\textbf{1.53}$\pm$0.49 & \cellcolor{bestcolor}\textbf{63.85}$\pm$4.16 & \cellcolor{bestcolor}\textbf{98.89}$\pm$0.43 & \cellcolor{bestcolor}\textbf{58.21}$\pm$2.49 & \cellcolor{bestcolor}\textbf{68.72}$\pm$2.34 & 34.87$\pm$2.40 & \cellcolor{bestcolor}\textbf{23.59}$\pm$2.15 & \cellcolor{bestcolor}\textbf{10.7} & \cellcolor{bestcolor}\textbf{1.25} \\
\midrule \multicolumn{15}{c}{\textbf{GLM 5.2 Fast}} \\
Codex & \cellcolor{secondcolor}\underline{83.33}$\pm$1.89 & 2.50$\pm$0.61 & 96.86$\pm$0.78 & 80.56$\pm$1.48 & 11.0 & 10.56$\pm$1.15 & \cellcolor{secondcolor}\underline{64.62}$\pm$4.14 & 90.00$\pm$1.13 & \cellcolor{secondcolor}\underline{56.92}$\pm$2.50 & 63.33$\pm$2.43 & \cellcolor{secondcolor}\underline{52.05}$\pm$2.52 & \cellcolor{secondcolor}\underline{30.00}$\pm$2.31 & 4.6 & 2.92 \\
Openclaw & 77.95$\pm$2.10 & 2.78$\pm$0.64 & 98.95$\pm$0.31 & 79.44$\pm$1.51 & 18.4 & 15.69$\pm$1.36 & 56.15$\pm$4.29 & 89.72$\pm$1.14 & 53.33$\pm$2.51 & 63.85$\pm$2.42 & 50.77$\pm$2.52 & \cellcolor{bestcolor}\textbf{30.51}$\pm$2.32 & 2.5 & 3.58 \\
DeepSeek Harness & 69.49$\pm$2.32 & 0.56$\pm$0.34 & \cellcolor{secondcolor}\underline{85.31}$\pm$1.32 & \cellcolor{secondcolor}\underline{35.00}$\pm$1.77 & 25.0 & 26.39$\pm$1.64 & 38.46$\pm$4.21 & 86.94$\pm$1.26 & 44.87$\pm$2.51 & 50.77$\pm$2.52 & 50.00$\pm$2.52 & 23.85$\pm$2.16 & 1.7 & 4.42 \\
Claude Code & \cellcolor{bestcolor}\textbf{86.92}$\pm$1.72 & 1.53$\pm$0.49 & 90.85$\pm$0.93 & 45.42$\pm$1.85 & \cellcolor{secondcolor}\underline{8.5} & \cellcolor{secondcolor}\underline{5.97}$\pm$0.90 & \cellcolor{bestcolor}\textbf{80.77}$\pm$3.47 & \cellcolor{secondcolor}\underline{92.78}$\pm$0.98 & \cellcolor{secondcolor}\underline{56.92}$\pm$2.50 & \cellcolor{bestcolor}\textbf{67.18}$\pm$2.37 & \cellcolor{bestcolor}\textbf{56.41}$\pm$2.50 & 29.74$\pm$2.31 & \cellcolor{secondcolor}\underline{6.1} & \cellcolor{bestcolor}\textbf{2.00} \\
\textbf{PE-Loop (ours)} & 76.92$\pm$2.13 & 10.42$\pm$1.15 & \cellcolor{bestcolor}\textbf{63.31}$\pm$0.39 & \cellcolor{bestcolor}\textbf{6.41}$\pm$0.24 & \cellcolor{bestcolor}\textbf{2.5} & \cellcolor{bestcolor}\textbf{1.11}$\pm$0.43 & 63.85$\pm$4.16 & \cellcolor{bestcolor}\textbf{96.25}$\pm$0.73 & \cellcolor{bestcolor}\textbf{59.23}$\pm$2.48 & \cellcolor{secondcolor}\underline{66.92}$\pm$2.37 & 40.00$\pm$2.47 & 27.18$\pm$2.25 & \cellcolor{bestcolor}\textbf{18.3} & \cellcolor{secondcolor}\underline{2.08} \\
\midrule \multicolumn{15}{c}{\textbf{Kimi K3}} \\
Codex & 30.51$\pm$2.32 & 20.42$\pm$1.50 & \cellcolor{bestcolor}\textbf{56.71}$\pm$4.01 & 13.61$\pm$1.28 & \cellcolor{secondcolor}\underline{8.1} & \cellcolor{secondcolor}\underline{4.44}$\pm$0.79 & 11.54$\pm$2.88 & \cellcolor{secondcolor}\underline{86.39}$\pm$1.28 & 26.67$\pm$2.23 & 27.44$\pm$2.25 & 24.62$\pm$2.18 & \cellcolor{secondcolor}\underline{21.28}$\pm$2.07 & \cellcolor{secondcolor}\underline{2.2} & \cellcolor{secondcolor}\underline{2.50} \\
Openclaw & 5.13$\pm$1.16 & 0.42$\pm$0.31 & 95.00$\pm$2.57 & \cellcolor{secondcolor}\underline{2.50}$\pm$0.61 & 30.0 & 92.92$\pm$0.97 & 0.00$\pm$1.05 & 8.19$\pm$1.03 & 1.79$\pm$0.75 & 1.79$\pm$0.75 & 5.13$\pm$1.16 & 0.51$\pm$0.51 & 0.1 & 4.71 \\
DeepSeek Harness & 22.56$\pm$2.12 & 0.00$\pm$0.20 & 76.17$\pm$2.76 & 4.58$\pm$0.80 & 30.0 & 73.61$\pm$1.64 & 1.54$\pm$1.47 & 32.08$\pm$1.74 & 12.05$\pm$1.66 & 15.90$\pm$1.86 & 17.18$\pm$1.91 & 6.15$\pm$1.25 & 0.5 & 3.88 \\
Claude Code & \cellcolor{secondcolor}\underline{54.10}$\pm$2.51 & 0.69$\pm$0.36 & 82.36$\pm$1.56 & 17.92$\pm$1.43 & 21.2 & 29.31$\pm$1.69 & \cellcolor{secondcolor}\underline{24.62}$\pm$3.76 & 61.11$\pm$1.81 & \cellcolor{secondcolor}\underline{34.87}$\pm$2.40 & \cellcolor{secondcolor}\underline{39.49}$\pm$2.46 & \cellcolor{secondcolor}\underline{38.21}$\pm$2.45 & 18.21$\pm$1.96 & 1.5 & 2.83 \\
\textbf{PE-Loop (ours)} & \cellcolor{bestcolor}\textbf{83.08}$\pm$1.90 & 7.64$\pm$1.00 & \cellcolor{secondcolor}\underline{63.07}$\pm$0.33 & \cellcolor{bestcolor}\textbf{0.00}$\pm$0.19 & \cellcolor{bestcolor}\textbf{4.5} & \cellcolor{bestcolor}\textbf{2.50}$\pm$0.61 & \cellcolor{bestcolor}\textbf{74.62}$\pm$3.80 & \cellcolor{bestcolor}\textbf{97.64}$\pm$0.59 & \cellcolor{bestcolor}\textbf{61.54}$\pm$2.45 & \cellcolor{bestcolor}\textbf{71.03}$\pm$2.29 & \cellcolor{bestcolor}\textbf{43.33}$\pm$2.50 & \cellcolor{bestcolor}\textbf{28.21}$\pm$2.27 & \cellcolor{bestcolor}\textbf{11.0} & \cellcolor{bestcolor}\textbf{1.08} \\
\bottomrule\end{tabular}}
\end{table}

\begin{figure}[!t]
\centering
\includegraphics[width=\textwidth]{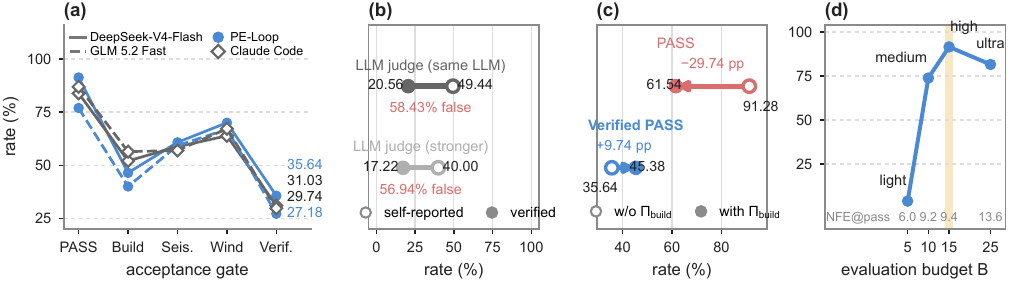}
\caption{Where certified success is won, and three interventions on PE-Loop. (a)~Gate survival of PE-Loop and the baseline leading raw PASS in both blocks, Claude Code, under DeepSeek-V4-Flash (solid) and GLM 5.2 Fast (dashed); all blocks in Fig.~\ref{fig:gates-all}. (b)~LLM judge in place of the margin oracle: self-reported (open) vs.\ physics-verified (filled) PASS (Table~\ref{tab:m3}). (c)~Rates without (open) and with (filled) $\Pi_{\mathrm{build}}$ (Table~\ref{tab:pibuild}). (d)~PASS against the evaluation budget (seed~0), with NFE@pass below.}
\label{fig:results}
\end{figure}

\section{Experiments}
\label{sec:experiments}
\paragraph{Setup.}
We evaluate five systems, namely PE-Loop, Codex, Openclaw, DeepSeek Harness, and Claude Code, each driven by four LLMs (DeepSeek-V4-Flash, Qwen3.8-Flash, GLM 5.2 Fast, and Kimi K3), on the 260-instance main split of VB-Bench (the selection procedure is described in Appendix~\ref{app:bench}). All rates are computed on the topology-decision instances of this split (all 130 for every system on every seed, Appendix~\ref{app:exp}) at effort level \texttt{high} ($B{=}15$, $K{=}3$) and averaged over three seeds. Only this stratum separates systems: section-only controls are passed by 93.6--97.9\% of answers under every system and boundary instances by none (Table~\ref{tab:strata}). Within each LLM block, the instances, the margin oracle (Eq.~\ref{eq:margin}), and the budget are identical, so the driving model is the only free variable. Verified PASS, defined as the conjunction of static compliance, the held-out time-history verification (Eq.~\ref{eq:m-l4}), and the constructability gate, serves as the empirical counterpart of $R^{\mathrm{cert}}$; column definitions and denominators are given in Appendix~\ref{app:exp}. Comparisons against LLM-free search use a separate search-sufficient reference set (GB-59, 59 instances $\times$ 3 seeds), which is never pooled with the main split. Of the 130 scored instances, 109 are text briefs, and 21 are sketch or image inputs; for the latter, the agent perceives $\hat M$, and adjudication runs on the true $M$. All rates pool the two groups.

\begin{figure}[!t]
\centering
\includegraphics[width=\textwidth]{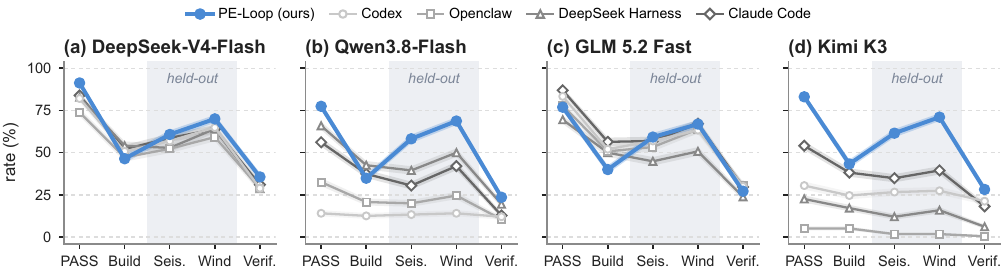}
\caption{Survival through the acceptance gates on all four LLM blocks (same construction as Fig.~\ref{fig:results}a; values from Table~\ref{tab:m1}; shaded band: the held-out re-checks).}
\label{fig:gates-all}
\end{figure}

\subsection{Main results}

\paragraph{Compliance tracks the verdict, not the model.}
PE-Loop holds 76.92--91.28\% PASS and 62.51--63.31\% NFE@pass across the four LLMs; only wall-clock time moves (1.6 to 4.5~min). A baseline is its own judge and inherits the model's variance: Codex falls from 83.33\% under GLM 5.2 Fast to 14.10\% under Qwen3.8-Flash. Section~\ref{sec:ablations} isolates the cause by swapping the verdict alone. The range across arms on raw PASS exceeds the 15~pp separation we pre-registered in every block (17.43~pp under DeepSeek-V4-Flash and GLM 5.2 Fast, the narrowest); the claims that follow nonetheless rest on the held-out and verified columns and on the four-backbone invariance, not on raw compliance alone.

\paragraph{Certified success is not raw compliance.}
Under GLM 5.2 Fast, the 10.00~pp raw lead of Claude Code (86.92 vs.\ 76.92) shrinks to 2.56~pp certified (29.74 vs.\ 27.18); under DeepSeek-V4-Flash, our 7.43~pp raw lead (91.28 vs.\ 83.85) shrinks to 4.61~pp certified (35.64 vs.\ 31.03), and the loss is concentrated at the constructability gate: 50.84\% of our statically compliant designs clear the detailing clause against 62.08\% for Claude Code, whereas at the held-out checks the two arms are within 4~pp of each other conditional on compliance (Fig.~\ref{fig:results}a; all blocks in Fig.~\ref{fig:gates-all}; Appendix~\ref{app:exp}). We hold the best Verified PASS in three of four blocks, the best seismic in all four, and the best wind in three; both exceptions, under GLM 5.2 Fast, lie within about one standard error of the difference (30.51 for Openclaw vs.\ 27.18 for PE-Loop on Verified PASS, $z\approx1.03$; 67.18 for Claude Code vs.\ 66.92 for PE-Loop on wind, $z\approx0.08$). Under DeepSeek-V4-Flash, our weakest gate is constructability (46.41\% vs.\ 47.18--54.10\%), diagnosed in Section~\ref{sec:ablations}; under Kimi K3, we already lead that gate.

\subsection{Ablations}
\label{sec:ablations}

\begin{figure}[!t]
\centering
\includegraphics[width=\textwidth]{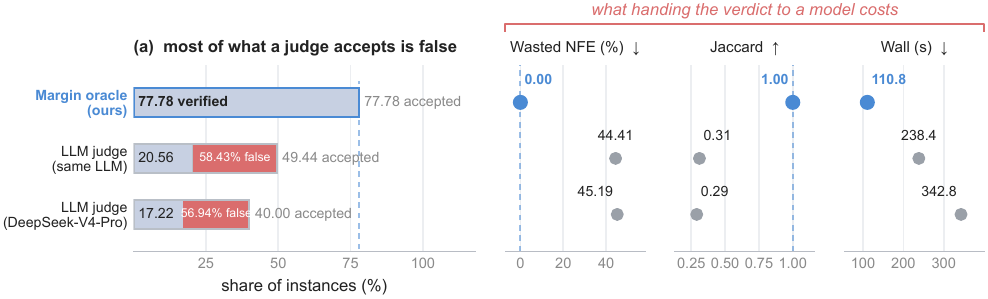}
\caption{Who holds the verdict (Table~\ref{tab:m3}). (a)~What each gate
accepts, split into physics-confirmed (light) and false (coral); dashed line:
what the oracle delivers. (b--d)~The price. Details in Appendix~\ref{app:exp}.}
\label{fig:judge}

\vspace{8pt}

\includegraphics[width=\textwidth]{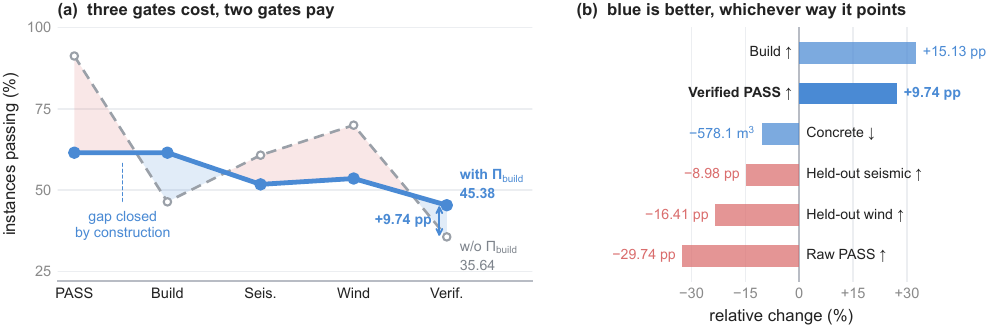}
\caption{$\Pi_{\mathrm{build}}$ trades raw for certified compliance
(Table~\ref{tab:pibuild}). (a)~Between the curves, coral is what it costs and
blue what it pays; (b)~relative change, blue is the better direction.
Details in Appendix~\ref{app:exp}.}
\label{fig:pibuild}
\end{figure}

\paragraph{Who holds the verdict (Fig.~\ref{fig:judge}; Table~\ref{tab:m3}).}
When $R$ (Eq.~\ref{eq:m-consistency}) is replaced by an LLM judge that sees the brief, the code limits, and the design parameters but no simulator output, with all else fixed, the same LLM claims 49.44\% PASS but delivers 20.56\%: 58.43\% of its acceptances are false, and 44.41\% of its evaluations are spent on misdirected repairs. A stronger, disjoint judge is no better (56.94\% false positives) and roughly triples wall-clock time, while the margin oracle claims exactly what it delivers (77.78\%; the 60 pre-registered instances hold 33 of the 34 failures of the main-table run, whose PE-Loop passes 81.67\% of them). Who holds the verdict drives compliance, not a correlate, consistent with the unreliability of unaided self-verification \citep{huang2024cannot,kambhampati2024llmmodulo} and with the optimization pressure a learned judge invites \citep{gao2023scaling}.

\paragraph{Constructability (Fig.~\ref{fig:pibuild}; Table~\ref{tab:pibuild}).}
Our constructability deficit comes from lever order: the outer loop enlarges the column before changing topology, and thick columns trip $l_n/h_b\ge4$ (Appendix~\ref{app:exp}). The gap is not a quirk of our loop: on the reference set, every system returns unbuildable designs, at 42.75--72.41\% of its feasible answers (Fig.~\ref{fig:build} in Appendix~\ref{app:build}). Moreover, 123 of our 126 violations are repairable by clipping one section scale, and $\Pi_{\mathrm{build}}$ does exactly that inside the loop: raw PASS falls from 91.28 to 61.54\% as thick-column designs that passed only by violating the clause are removed, but Verified PASS rises from 35.64 to 45.38\%, and concrete falls from 5,610.1 to 5,032.0~m$^3$.

\subsection{Budget and what the language model buys}

\begin{figure}[!t]
\centering
\includegraphics[width=\textwidth]{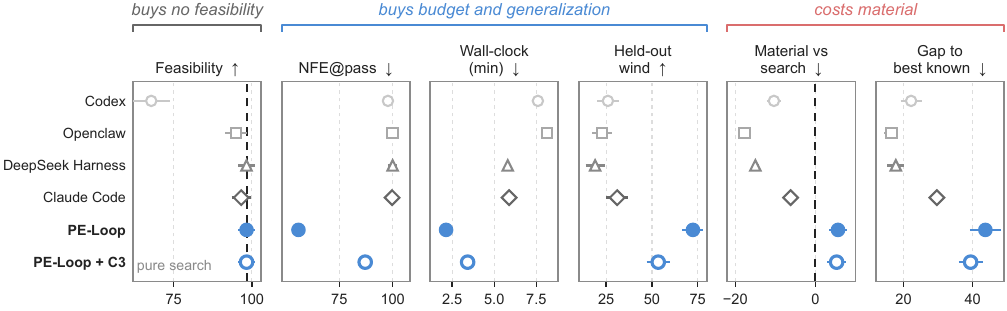}
\caption{What the language model buys on the search-sufficient reference set
(59 GB-family instances $\times$ 3 seeds, DeepSeek-V4-Flash; means $\pm$ SE).
Dashed line: exhaustive LLM-free section search. Details in Appendix~\ref{app:exp}.}
\label{fig:tracka}
\end{figure}

\paragraph{Budget discipline is structural.}
The escalation action makes the budget a property of the algorithm: because one outer round costs at most 9 oracle calls (Lemma~\ref{lem:bisect}), the exhaustion rate of PE-Loop never exceeds 6.41\% under any of the four LLMs (Table~\ref{tab:m1}). Under DeepSeek-V4-Flash, the baselines, which have no budget-aware escalation action, exhaust it on 58.19--91.81\% of their answers, and PE-Loop answers at 62.51\% NFE@pass compared with 84.90--98.08\%. The budget is also a dial, in the spirit of test-time compute scaling \citep{snell2025scaling}: on the identical 130 instances (seed~0), success runs 3.85 / 73.85 / 91.54\% from \texttt{light} to \texttt{high} (Fig.~\ref{fig:results}d), and \texttt{ultra} trades 10.00~pp of it for one more outer round (Appendix~\ref{app:exp}).

\paragraph{What the language model buys (Fig.~\ref{fig:tracka}).}
Against LLM-free pure search on the reference set, the feasibility of PE-Loop ties at 98.31\% with no discordant instances ($n_{01}{=}0$, $n_{10}{=}0$, $N{=}59$; McNemar's test, \citealp{mcnemar1947note}): the model buys no feasibility. It buys budget (NFE@pass 55.86\% vs.\ 97.78--100.00\%) and held-out generalization (72.41\% wind vs.\ 18.97--31.03\%). The price is material: 5.66\% more concrete than pure search, where every baseline spends less ($-17.74$ to $-6.17$\%), because $b^\star$ (Eq.~\ref{eq:m-bstar}) is minimal along a single coordinate; C3 recovers part of it, and closing the rest is the clearest open direction. Among the baselines, only Codex differs significantly from search on feasibility (Appendix~\ref{app:exp}).

\section{Conclusion and limitations}
\label{sec:limitations}
The results locate reliability in who holds the verdict, not in the proposer's strength. Once a deterministic engine alone accepts designs, held-out pass rates and oracle cost become properties of the loop: swapping the backbone barely moves them, whereas swapping only the verdict holder turns most acceptances into false positives, even with a stronger judge, because capability cannot substitute for access to the physics. Three implications follow. First, the language model's role narrows: on instances search alone can solve, it adds no feasibility; its value lies in choosing among discrete structural alternatives and spending a fixed budget well. Second, raw compliance is the wrong yardstick for physical-world agents: certified success reorders systems, and every system we tested, ours included, returns unbuildable designs that surface only under out-of-loop checks, so benchmarks should score certificates rather than claims. Third, a useful verifier is external, deterministic, and constraint-wise: its per-constraint margin both resists persuasion and tells the proposer what to repair. Limits remain: the held-out checks share the loop's physical model, certified success stays below one half, and single-coordinate search costs 5.66\% more concrete than exhaustive search; independent-solver validation, broader clause coverage, and multi-coordinate search come next. Wherever an accepted output is costly to reverse, the verdict should rest with an arbiter the proposer cannot persuade.

\subsection*{AI use statement}
We have not used generative AI tools to generate synthetic data, to develop the theoretical framework, to formulate or prove the claims of Lemmas~\ref{lem:bisect}--\ref{lem:safe}, to implement PE-Loop or the evaluation harness, to design the experiments, or to propose hypotheses; translation, dataset cleaning and reformatting, and qualitative data analysis do not apply to this work. We used generative AI tools to help interpret results: to cross-check the numbers reported in the text against the frozen result tables, to compute derived statistics (standard errors and $z$-scores of differences), and to draft the wording of result interpretations. We also used them to edit and condense text for readability, to create scientific figures and the scripts that generate them, and to write and edit verification code. We reviewed all AI-assisted work: we read back every plotted value from the figure objects and asserted it, the authors recomputed every reported number and derived statistic from the per-episode records, and the authors checked and approved all claims and interpretations. We take responsibility for the final content of this work, including text, claims, or artifacts produced with the aid of generative AI.

\subsection*{Ethics statement}
The designs returned by any system evaluated here are schematic and have not been reviewed by a licensed engineer; nothing in this paper substitutes for that review, and the benchmark certificate concerns compliance with the sourced clauses under the stated analyses, not a building's safety. The instances are derived from published building descriptions and public design codes and contain no personal data. The benchmark is intended to measure and improve the verifiability of automated design tools, which we expect to reduce, not increase, the risk that an unverified machine-generated design reaches construction.

\subsection*{Reproducibility statement}
We will release VB-Bench (instances, code-family limits, screening and difficulty labels, the 260-instance main split, and the 59-instance search-sufficient reference set), the margin oracle with the held-out seismic and wind analyses, PE-Loop, and the harness that drives all five systems upon publication. Every number in Section~\ref{sec:experiments} and Appendix~\ref{app:exp} is copied from frozen result tables produced by a single campaign; the release will include those tables, the per-episode records, the seeds, the model identifiers and effort tiers of Appendix~\ref{app:effort}, the denominators disclosed in Appendix~\ref{app:exp}, and the scripts that regenerate the figures and re-assert every number in the text against the tables. The oracle is deterministic; the residual variance comes from the language models, which we access through commercial APIs and therefore cannot reproduce bit-for-bit, so we report each rate over three seeds with its standard error.

\bibliography{iclr2027_conference}
\bibliographystyle{iclr2027_conference}

\appendix
\raggedbottom
\section{Related Work}
\label{sec:related}

\paragraph{Language model agents for structural analysis and design.}
Research on applying language models to structural analysis and design spans two levels. At the generation level, LLM agents translate natural language into executable finite-element code \citep{qi2025feagpt}, evolving toward lightweight, router-based multi-agent architectures for frame structures \citep{geng2025lightweight,geng2026agentic}, specialized member modeling \citep{wang2026helm}, code-compliant reinforced concrete design \citep{chen2025multiagent}, foundation type selection \citep{youwai2025router}, multiphysics solver generation \citep{zhang2025mooseagent}, retrieval-augmented structural calculation \citep{choi2026ramasc}, reliability-oriented structural analysis \citep{liu2026reliable}, and editable building models \citep{du2026text2bim}. At the verification level, multi-agent systems automate design review \citep{liang2025automating} and code compliance checking of building information models, increasingly refined through reinforcement learning \citep{madireddy2025bim,shi2026rl}, and benchmarks now grade domain knowledge \citep{liang2026aecbench}, simulation-validated engineering designs \citep{guo2025engdesign}, and the executable evidence chain from request to report \citep{qin2026structureclaw}. In most of these cases, the solver merely executes a task whose output is already known or verifiable. Hence, evaluation reduces to reproducing or checking a known answer rather than judging an undetermined design. Separately, data-driven methods generate shear-wall layouts with adversarial and diffusion models \citep{liao2021structgan,lu2022physicsgan,gu2024diffusion,liao2024genai} or combine generative neural fields, principal component analysis, and topology optimization \citep{sigmund2013topology} to produce diverse layouts from numerical objectives \citep{radler2025diverse,chen2024deep,yang2025ddtd}, without reading a natural-language brief. Concurrently, two systems close a repair loop around a deterministic structural solver \citep{heo2026hybrid,luo2026verification}. \citet{luo2026verification} let the LLM adjust member sections from the verifier's exceedance report on 44 member-scale cases (beams, trusses, and low-rise planar frames), report compliance rising from 56.8\% to 98.6\%, and attribute it to the verifier because it does not change detectably across two backbones; they note that the remaining failures stem from the initial topology, which a parameter-level loop cannot repair. We corroborate their attribution and extend it with topology-level repair by the language model while a deterministic search sets member sections, a benchmark of 3,577 building-scale instances across six code families, held-out seismic and wind re-checks that the loop never optimizes against, a counterfactual that replaces only the verdict holder (Table~\ref{tab:m3}), and a budget-bounded search with correctness guarantees (Lemmas~\ref{lem:bisect}--\ref{lem:safe}) compared against four coding harnesses under four backbones.

\paragraph{LLM agents grounded in physics and simulator feedback.}
Agents coupled with tools, physical models, or simulators interleave reasoning with action \citep{yao2023react} and explicitly plan and correct candidate decision paths using environment feedback or tree search \citep{yao2023tot}, turning one-shot generation into multi-step search and verification \citep{kim2025reflact,zhou2024lats,putta2024agentq}. Related work achieves policy improvement without parameter updates through structured self-reflection \citep{shinn2023reflexion}, distilling long-term interaction experience into reusable policy principles \citep{wang2024voyager,zhao2024expel} or structured skill libraries that evolve continually via reinforcement learning signals \citep{wu2025evolver,yang2026autoskill,wang2026sage}. Other work trains self-correction directly through multi-turn online reinforcement learning \citep{renze2024selfreflection,kumar2024score}. Who verifies matters: language-model judges carry systematic biases \citep{zheng2023judging,wang2024fair,gu2024llmasjudge}, models do not reliably correct their own reasoning without external feedback \citep{huang2024cannot}, refinement helps mainly when the critique is grounded in a tool \citep{madaan2023selfrefine,gou2024critic}, and policies optimized against a learned proxy overfit it \citep{gao2023scaling}, which motivates letting the model generate while sound external critics accept \citep{kambhampati2024llmmodulo}. Regarding feedback granularity, process supervision outperforms outcome supervision \citep{uesato2022process,lightman2024verify}, with simulator outputs, execution feedback, or process reward model returns used to score intermediate decisions and code-level proposals step by step \citep{gehring2025rlef,xi2025agentprm}; reinforcement learning from verifiable rewards trains against programmatic checkers \citep{shao2024deepseekmath,guo2025deepseekr1}, test-time computation can substitute for model scale when a reliable verifier exists \citep{snell2025scaling}, evaluator-in-the-loop search discovers new constructions and algorithms \citep{romeraparedes2024funsearch,novikov2025alphaevolve,lu2024aiscientist}, and executable benchmarks grade agents by running their outputs, from unit tests \citep{jimenez2024swebench} to repeated-trial reliability measured by pass$^k$ \citep{yao2025taubench}. A more demanding setting remains open, one in which an accepted proposal must satisfy several externally specified performance levels simultaneously, rather than depending on a single one-shot reward signal. This departs from building-design practice, where multiple codes and performance levels are checked in parallel, and motivates the closed-loop mechanism with multidimensional verification developed in this paper.

\section{VB-Bench details}
\label{app:bench}

\begin{figure}[!t]
\centering
\includegraphics[width=0.95\textwidth]{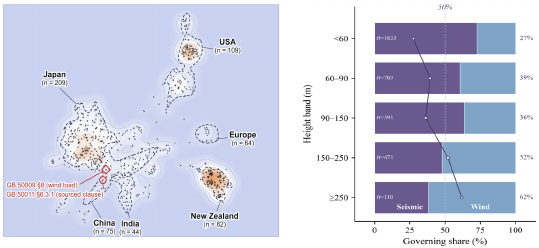}
\caption{\textbf{VB-Bench sources and governing loads.} (a) The structural-code knowledge base as one cross-linked graph. Each dot is a wiki entry (a code source document, an institution, or a clause concept) placed by a force-directed layout of the 1,847 wikilinks between entries; the shading is entry density weighted by link degree, from pale blue (sparse) through white to orange (hub pages). Dashed outlines enclose the densest region of each jurisdiction's entries, labeled with the number drawn; Europe's entries are the most dispersed, so its outline marks only its densest cluster. Two landmark pages are marked: GB~50011-2010 Clause~6.3.1, which sources the detailing rule of Table~\ref{tab:pibuild}, and the wind-load chapter of GB~50009-2012. 583 of the 625 entries (121 sources, 79 institutions, 425 concepts, distilled from 86 original codes) form the connected component shown; the remaining 42 unlinked or small-component pages are omitted. (b) Wind vs.\ seismic governing share by height band; wind governs the majority of instances above 150~m.}
\label{fig:benchapp}
\end{figure}

\paragraph{Instance and task layers.}
The instance layer provides 3,577 physically adjudicated buildings (Fig.~\ref{fig:bench}), grouped by construction route into synthetic instances (T1, 832), literature-grade reconstructions (T2, 2,010), real anchors (T3, 658), and complex-massing towers (T4, 77), which together supply both the source of design problems and a gradient of difficulty. The task layer contains 1,311 independently collected, gradable tasks, stratified by solvability and question type: intent (L1), design (L2), and modeling (L3). L2 spans the full set, while L1 (338) and L3 (384) are the subsets for which we could source intent and model ground truth. Contamination-resistant components comprise 1,197 paired variants and a quarantine set of 372 tasks.

\paragraph{From the instance layer to the experimental sets.}
The experiments do not run on the whole instance layer. Static screening (missing ground truth, too few stories, non-reinforced-concrete systems, infeasible slenderness) and a dynamic solvability calibration leave a difficulty map of 3,216 instances: 1,836 that a section change alone solves, 345 that require a topology decision, and 1,035 that no system solves (Table~\ref{tab:diffmap}); 32 further image-only instances are held out as a multimodal quarantine set and never used here, leaving a pool of 3,184. The 260-instance main split is a seeded, height-stratified sample of that pool: 130 topology-decision instances, 110 section-only controls, and 20 capability-boundary instances above 250~m, with per-band quotas set by supply: only five topology-decision instances exist between 120 and 250~m, so that band's quota was cut from 40 to 20. The freed slots went to the 24--60~m band, where topology decisions are densest. Running five systems on four backbones and three seeds over the full pool was outside the campaign's budget, and only the topology-decision stratum separates systems (Appendix~\ref{app:exp}). The smaller sets of Section~\ref{sec:experiments} derive from the same seed and nest inside one another: the 60-instance set of Table~\ref{tab:m3} was frozen together with its criteria before any run, and the 59-instance search-sufficient reference set is reused from an earlier campaign, whose exhaustive and grid-search reference lines make it the only set with an LLM-free baseline. PE-Loop and every baseline answered all 260 instances on every seed and backbone ($n{=}390$ topology-decision answers and $n{=}720$ answers in total per arm and backbone, the 20 boundary instances excluded from every comparative statistic; non-deliveries in Appendix~\ref{app:exp}). Table~\ref{tab:m1} reports scores only for the 130 topology-decision instances; the section-only and boundary strata are reported in Table~\ref{tab:strata}.

\paragraph{Acceptance.}
Every sample passes expert review, physical adjudication, and time-history verification. Acceptance uses the envelope of three load cases: gravity (self-weight) is always included, while seismic and wind are evaluated separately, and the most unfavorable value of each response quantity is taken without load-combination factors, since the envelope already provides a conservative upper bound. Acceptance limits scale with building height according to a tiered scale (Fig.~\ref{fig:benchapp}b).

\paragraph{Ground-motion and wind record libraries.}
Both held-out re-checks draw from fixed, public record libraries released with the benchmark (Fig.~\ref{fig:records}). The seismic library holds 62 recorded horizontal accelerograms in units of $g$: the 44 records of the FEMA P-695 far-field set~\citep{fema2009p695}, 22 shallow-crustal earthquakes of magnitude 6.5--7.6 recorded more than 10~km from the rupture, two horizontal components each, from San Fernando~1971 and Friuli~1976 to Chi-Chi~1999 and D\"uzce~1999, with peak accelerations of 0.21--0.82\,g, taken from the PEER NGA-West2 database~\citep{ancheta2014nga};\footnote{\url{https://ngawest2.berkeley.edu/}.} 16 further PEER records from six events (San Fernando 1971, Imperial Valley 1979, Manjil 1990, D\"uzce 1999, San Simeon 2003, and Chuetsu-oki 2007); and the two horizontal components of El Centro 1940 (Appendix~\ref{app:projectors}). Sampling intervals range from 0.0025 to 0.02~s and durations from 20 to 100~s; during ingestion, we checked every record against the point count and time step in its header and against its published peak acceleration, within 10\%. Only the P-695 subset is eligible for the held-out selection (Appendix~\ref{app:method}); we keep the rest for illustration. The wind library holds sixteen ten-minute wind-speed series at 5~Hz, stored as normalized fluctuations $v(t)/\bar v$ together with their turbulence intensity $I_u$. Eight are field records, resampled from the 20~Hz sonic-anemometer measurements at 110.5~m and 179.5~m on the tall meteorological tower of the IEA Wind Task~57 rotor-inflow benchmarks, four from stable and four from unstable atmospheric cases, with $I_u$ of 0.055--0.105~\citep{doubrawa2025rotorinflow,rybchuk2025rotorinflow};\footnote{Documentation: \url{https://rotor-inflow-benchmark.readthedocs.io/}. Data, CC BY 4.0: \url{https://zenodo.org/records/15747594}, \url{https://zenodo.org/records/15747684}.} eight are synthesized by spectral representation~\citep{shinozuka1972digital} from the Kaimal and Davenport spectra~\citep{kaimal1972spectral,davenport1961spectrum} at target $I_u$ of 0.12, 0.18, and 0.25, the three terrain categories the benchmark distinguishes, with fixed seeds.

\begin{figure}[!htb]
\centering
\includegraphics[width=0.95\textwidth]{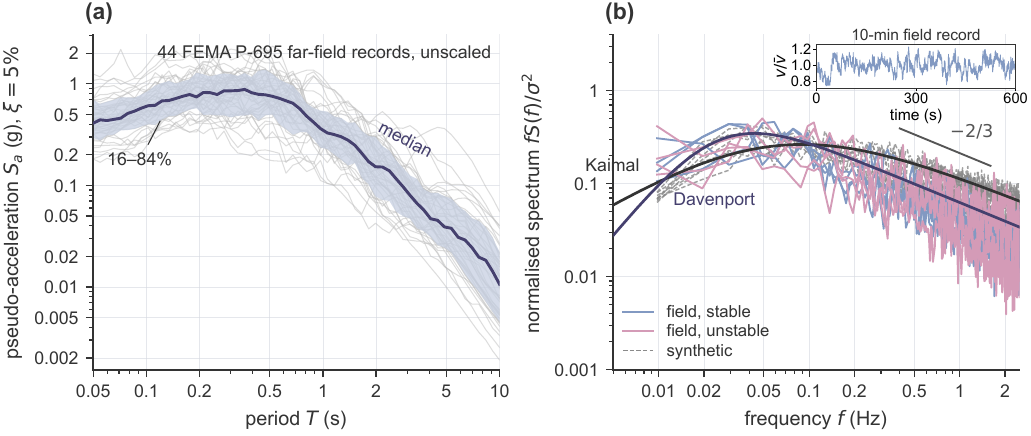}
\caption{\textbf{The held-out record libraries.} (a) Unscaled 5\%-damped pseudo-acceleration response spectra of the 44 FEMA P-695 far-field records (gray), with their median and 16--84\% band; only these records enter the held-out selection. (b) Normalized power spectra $fS(f)/\sigma^2$, with $\sigma^2$ the variance resolved between 0.01 and 2.5~Hz, of the eight field-recorded wind series (color, by atmospheric stability) and the eight spectrum-consistent synthetic series (gray, dashed), against the Kaimal and Davenport targets used for synthesis; the short line marks the $-2/3$ inertial-subrange slope; inset, one field record over its ten minutes. Sources and processing in the text.}
\label{fig:records}
\end{figure}

\paragraph{Structural-code knowledge base.}
To support clause-level verification of design limits, we build a parallel knowledge base. It collects 86 original structural design codes from six countries and regions, namely China (GB~50011-2010, JGJ~3-2010, and GB~50009-2012; \citealp{gb50011,jgj3,gb50009}), Europe (Eurocode~8; \citealp{en1998}), the United States (ASCE/SEI~7; \citealp{asce7}), India (IS~1893; \citealp{is1893}), Japan (the Building Standard Law, BSL; \citealp{bsl_japan}), and New Zealand (NZS~1170.5; \citealp{nzs11705}), and organizes them into 625 cross-linked entries comprising 121 code source documents, 79 institution and committee entities, and 425 clause concepts (Fig.~\ref{fig:benchapp}a). It covers seismic and wind design provisions, load values (dead, live, wind, and seismic), and concrete structural design rules, and tracks each code's revision history. Every limit in the task set carries a clause number traceable to this knowledge base.

\paragraph{Detailing set.}
Every code-compliance instance carries a detailing set $G^{\mathrm{build}}_x$ drawn from its code family with clause provenance in the knowledge base. The current release sources one clause family, the beam clear-span-to-depth limit $l_n/h_b\ge 4$ (GB~50011-2010 Clause~6.3.1, \citealp{gb50011}; JGJ~3-2010 Clause~6.3.1, \citealp{jgj3}), evaluated geometrically from the returned member sections at no simulator cost; the two further candidate clauses and the reasons for their exclusion are given in Appendix~\ref{app:build}.

\section{One massing, four projectors}
\label{app:projectors}

Figures~\ref{fig:app-classics} and~\ref{fig:app-supertall} show six buildings, each read from one massing by four tools, to make visible what the margin oracle of Section~\ref{sec:method} analyzes. The four columns are the same in both figures. They are headed by the tool that produced them: the massing in Rhino\footnote{Rhinoceros 3D, Robert McNeel \& Associates, \url{https://www.rhino3d.com}.}, a Blender\footnote{Blender, Blender Foundation, \url{https://www.blender.org}.} rendering of it, the frame idealization the physics engine adjudicates in OpenSeesPy~\citep{mckenna2011opensees,zhu2018openseespy}, and that frame's response to a recorded ground motion. The first two columns are what a designer sees; the last two are what the physics engine sees. Table~\ref{tab:app-projectors} lists the six frame models.

\paragraph{Rhino (column 1).} For the three classics, the panel is a still from a four-viewport screen recording of the interactive Rhino session in which the massing was modeled; it is the first frame in which the model is complete. The benchmark's LLM-free search produced the three towers, so no such session exists, and our headless Rhino export writes box massings only; their panel therefore draws the massing the frame generator starts from, one lofted perimeter ring per floor, in a Rhino-style four-viewport layout. That drawing is ours, not a Rhino screen.

\paragraph{Blender (column 2).} For the classics, the panel shows the Blender scene of the same massing, re-rendered headless with Cycles on a light world. For the towers, we loft the column~1 massing, mesh it, and render it with the same world, camera rule, and Cycles settings, with a glazed facade and a story band per floor so the form reads as a building.

\paragraph{OpenSeesPy model (column 3).} The panel shows the frame idealization that the benchmark generates and the physics engine analyzes, as four orthographic views (plan, north and east elevations, isometric). Every structural member is drawn; in the tower rows, facade members are hidden so that columns, cores, and belts are visible.

\paragraph{OpenSeesPy response (column 4).} The panel shows the deformed shape of the column-3 model at the instant of peak roof drift under a linear time-history analysis run for these figures: the east--west (S90W) component of the 1940 Imperial Valley earthquake recorded at the El Centro terminal substation, USGS station 117, in the Caltech corrected digitization~\citep{hudson1971strong} as redistributed by vibrationdata.com\footnote{\url{https://www.vibrationdata.com/elcentro_EW.dat}, two-column file of time and acceleration in $g$.} (PGA 0.214\,g, 2,674 points at $\Delta t=0.02$\,s, 53.5\,s), uniform base excitation along global $X$, Newmark average acceleration~\citep{newmark1959method} with the effective stiffness factorized once, Rayleigh damping of 5\% on the first two modes, every node recorded at every step. Members are colored by $|u|$ from zero to that row's own peak, against the undeformed ghost. This analysis is illustrative: it is not the held-out verification of Eq.~\ref{eq:m-l4}, which uses the benchmark's ground-motion set and is scored, not drawn. Before any step was run, each rebuilt model was asserted to reproduce the node count, element count, and first period stored with the building; Table~\ref{tab:app-projectors} gives those counts, $T_1$, the number of structurally connected pieces the frame generator leaves, the peak roof drift, and the peak $|u|$.

\paragraph{What these figures do not claim.} The classics are massings modeled in an interactive Rhino session before this work and turned into frames by the same generator; the towers come from the benchmark's LLM-free search (target derivation, grid search over sections, verification). None of the six is an episode of Section~\ref{sec:experiments}. The three mega-frames are certified on drift only, since the mega-frame model does not evaluate top acceleration, period ratio, or the shear-weight and stiffness-weight ratios; the classics carry every check. The response model is screening-level and elastic: linear members, no gravity in the seismic case, no material nonlinearity, no soil, one horizontal component of one record. We did not compare against any measured or published response of these buildings, and the figures make no claim about the analysis's accuracy. They show that one massing consistently drives four tools and what the last two contain.

\begin{table}[!htb]
\centering\small
\caption{The six frame models of Figs.~\ref{fig:app-classics} and~\ref{fig:app-supertall}: size of the idealization that the physics engine analyzes, first period $T_1$, structurally connected pieces left by the frame generator, and peak response under El Centro 1940 EW (settings in Appendix~\ref{app:projectors}).}
\label{tab:app-projectors}
\resizebox{\textwidth}{!}{%
\begin{tabular}{lrrrrrrrr}
\toprule
Massing (after) & $H$ (m) & Stories & Nodes & Elements & $T_1$ (s) & Pieces & Peak roof drift & Peak $|u|$ (mm) \\
\midrule
\multicolumn{9}{l}{\emph{Classics (Fig.~\ref{fig:app-classics}), displacements drawn $\times$50}} \\
Seagram Building & 157 & 38 & 2,746 & 7,469 & 2.55 & 2 & $H/483$ & 325 \\
Hilversum town hall & 48 & 13 & 243 & 454 & 0.92 & 4 & $H/503$ & 96 \\
Casa del Fascio & 16.6 & 4 & 280 & 576 & 0.49 & 1 & $H/456$ & 42 \\
\midrule
\multicolumn{9}{l}{\emph{Supertall towers (Fig.~\ref{fig:app-supertall}), displacements drawn $\times$100}} \\
Burj Khalifa & 828 & 162 & 6,194 & 20,736 & 8.44 & 1 & $H/697$ & 1,189 \\
Shanghai Tower & 632 & 140 & 5,358 & 17,964 & 8.57 & 1 & $H/828$ & 762 \\
Taipei 101 & 508 & 101 & 3,264 & 10,096 & 4.35 & 1 & $H/807$ & 630 \\
\bottomrule
\end{tabular}}
\end{table}

\begin{figure}[!tp]
\centering
\includegraphics[width=\textwidth]{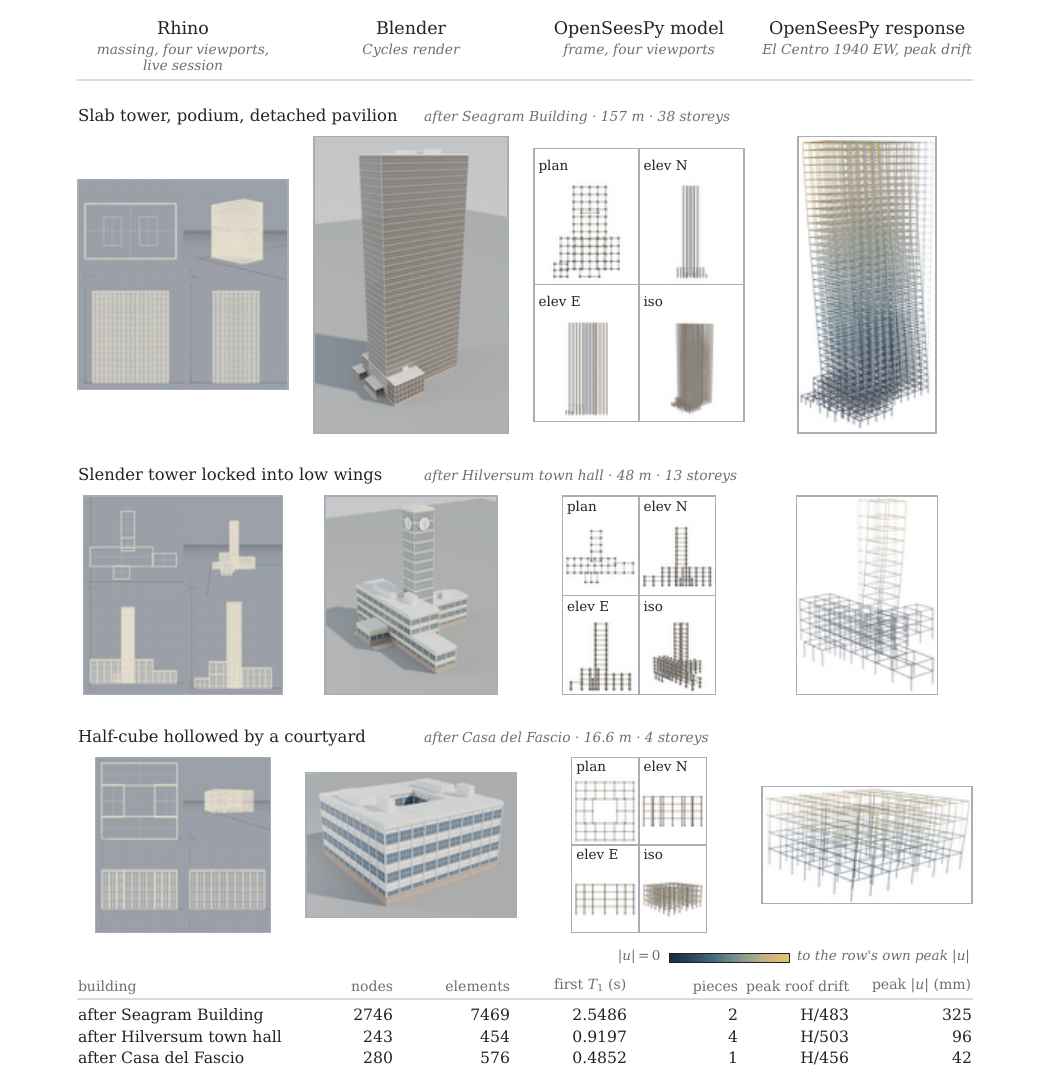}
\caption{\textbf{Classic buildings, four projectors.} Three massings modeled in an interactive Rhino session before this work: a slab tower on a podium with a detached pavilion (after the Seagram Building, 157\,m, 38 stories), a slender tower locked into low wings (after the Hilversum town hall, 48\,m, 13 stories), and a courtyard block (after the Casa del Fascio, 16.6\,m, 4 stories). Column~1 is a still of that session, column~2 its Blender scene, columns~3 and~4 the frame the physics engine analyzes and its peak response, displacements drawn $\times$50. The town hall's four structurally separate pieces move independently in column~4; the frame generator leaves this, not a display fault. Columns and analysis as described in Appendix~\ref{app:projectors}; model sizes and peak responses in Table~\ref{tab:app-projectors}.}
\label{fig:app-classics}
\end{figure}

\begin{figure}[!tp]
\centering
\includegraphics[width=\textwidth]{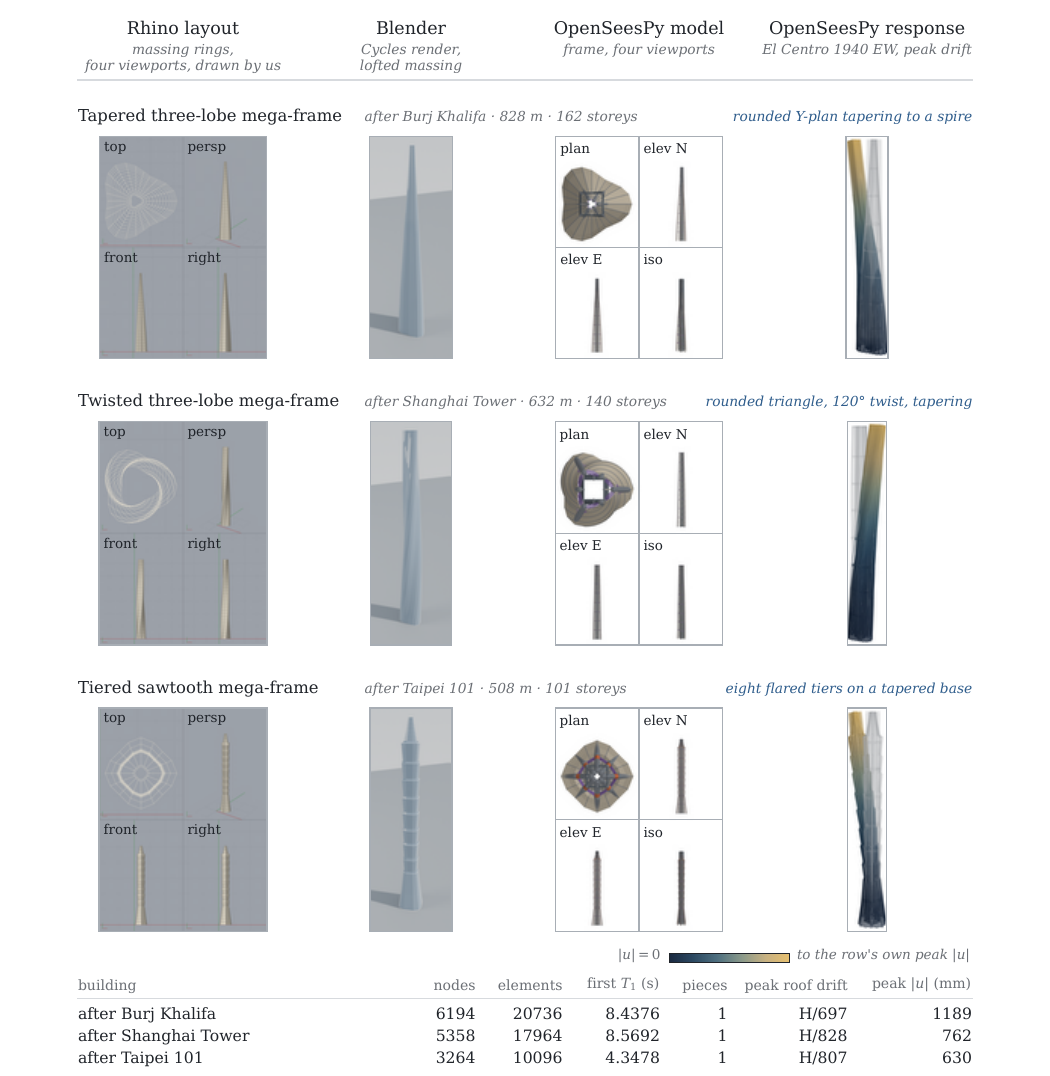}
\caption{\textbf{Supertall towers, four projectors.} Three mega-frames from the benchmark's LLM-free search: a rounded Y-plan tapering to a spire (after Burj Khalifa, 828\,m, 162 stories), a rounded triangle with a 120$^\circ$ twist (after Shanghai Tower, 632\,m, 140 stories), and eight flared tiers on a tapered base (after Taipei~101, 508\,m, 101 stories). Column~1 is the lofted massing we drew, column~2 is our Cycles render, and columns~3 and~4 show the frame the physics engine analyzes and its peak response, with displacements drawn $\times$100. Seen from above, the ring beams of a tapering or twisting tower sit at a different radius on every floor, so the plan viewports of columns~1 and~3 both show a stacked disc. All three are drift-certified only. Columns and analysis are as described in Appendix~\ref{app:projectors}; model sizes and peak responses are in Table~\ref{tab:app-projectors}.}
\label{fig:app-supertall}
\end{figure}

\section{Method details and derivations}
\label{app:method}

\subsection{In-loop constraint set and the role of the PPR}
\label{app:signal}

The in-loop constraint set $G_x$ is fixed by the code family and the task type of the instance (Section~\ref{sec:task}) and is evaluated by equivalent static and modal analysis. Code-compliance instances use the checks of the bound code family for which limits could be sourced, typically upper bounds on the inter-story drift ratio, the torsional displacement ratio, the minimum shear-to-weight ratio, the story lateral-stiffness ratio, the torsional-to-translational period ratio, and the stiffness-to-weight ratio (global stability index); performance-target instances additionally carry a lower bound on the fundamental frequency and an upper bound on the static top displacement. For buildings taller than 150~m, an upper bound on wind-induced peak acceleration also enters $G_x$ and is checked at every evaluation by a modal-superposition wind time history. Response-history analysis under recorded ground motions never enters $G_x$; candidates rejected by the zero-reference gates of Section~\ref{sec:task} are treated as $R=0$ with every constraint violated.

The PPR is formally close to potential-based reward shaping~\citep{ng1999policy}. However, it is neither a hand-designed potential nor does it enter the return in the difference form $\gamma\phi(s')-\phi(s)$; it is used only as an observation and a verdict (violated constraints guide repair before feasibility, concrete volume ranks candidates after feasibility). It therefore inherits no policy-invariance guarantee, and such theoretical properties are outside the scope of this work. Because the design loop consumes only the sign of $\boldsymbol{\rho}$, its behavior does not depend on any heuristic weighting of margin magnitudes, and there is no tunable intermediate quantity between guidance and verdict. Magnitudes are returned to the agent and recorded in the certificate but not consumed by the search; using them, e.g., to order C1 actions by the most violated constraint, is a natural extension we leave open.

\subsection{Initial topology and the column grid}
\label{app:intent}

In $\theta_d=(n_{\mathrm{core}},n_{\mathrm{wall}},\kappa,\varsigma)$, the core plan geometry $\kappa$ consists of the core-to-floor area ratio, the plan aspect ratio, and the plan offset; the section triple $\varsigma$ consists of the beam-section scale, the concrete-grade scale, and the vertical taper, each on a small finite menu. The intent $\hat y=(\text{system},\text{governing load},\text{strategy})=\pi_{\mathrm{int}}(\hat M,S)$ fixes the initial topology
\begin{equation}
\begin{aligned}
n^{(0)}_{\mathrm{core}}&=\mathds{1}\big[\texttt{add\_core}\in\text{strategy}\ \vee\ \text{system}=\text{frame-core}\big],\\
n^{(0)}_{\mathrm{wall}}&=\mathds{1}\big[\texttt{add\_wall}\in\text{strategy}\ \vee\ \text{system}=\text{frame-shear-wall}\big];
\end{aligned}
\label{eq:m-intent}
\end{equation}
$\kappa^{(0)}$ and $\varsigma^{(0)}$ take the defaults (area ratio $0.15$, aspect ratio $1$, zero offset; beam-section and concrete-grade scales $1$, taper $0$). Equation~\ref{eq:m-intent} writes the semantic decision explicitly into $\theta_d^{(0)}$, which also lets Section~\ref{sec:experiments} test whether the final design is consistent with the declared intent.

The column grid $G_b$ of Eq.~\ref{eq:m-space} is a convention rather than a derivation, and we state its provenance. The 50~mm step is the construction modulus on which reinforced-concrete column dimensions are specified, so every $b\in G_b$ is buildable. The lower end, 0.30~m, is the smallest section the parametric frame template instantiates. The upper end, 2.50~m, is the clamp on column sections shared by the search spaces of all compared systems; an earlier, agent-specific ceiling of 1.6~m made the agent escalate before the shared search space was exhausted, so we unified it to the shared clamp. The resulting size $|G_b|=45$ is what fixes the bisection cost in Lemma~\ref{lem:bisect}.

\subsection{Detailing set and projection}
\label{app:build}

The frame template sizes members from $\theta$ alone: the column is square with side $b$; the beam depth and width are $h_b=0.8\,b\,\varsigma_{\mathrm{beam}}$ and $b_b=\max(0.4\,b,\ h_b/2.5)$ with $\varsigma_{\mathrm{beam}}\in[0.5,2.0]$; the bay width in each plan direction is $W/\max(1,\lfloor W/6\,\mathrm{m}\rceil)$, with $\lfloor\cdot\rceil$ rounding to the nearest integer (and likewise for $D$), so the beam clear span is $l_n=\mathrm{bay}-b$, taken as the smaller of the two directions. The detailing set $G^{\mathrm{build}}_x$ currently sources one clause family, the clear-span-to-depth limit $l_n/h_b\ge 4$ (GB~50011-2010 Clause~6.3.1; JGJ~3-2010 Clause~6.3.1), whose margin follows Eq.~\ref{eq:margin} with $q=h_b/l_n$, $\mathrm{lim}=1/4$, and $s=+1$. Because $h_b$ grows with $b$ while $l_n$ shrinks with $b$, the constraint binds for thick columns: with $\mathrm{bay}=6$~m and $\varsigma_{\mathrm{beam}}=1$ it is violated for $b>6/4.2=1.43$~m, i.e., from $b=1.45$~m on the grid.

Before every evaluation the agent applies the projection
\begin{equation}
\Pi_{\mathrm{build}}:\quad
\varsigma_{\mathrm{beam}}\leftarrow\min\{\varsigma_{\mathrm{beam}},\ l_n/(3.2\,b)\},\qquad
b_{\max}\leftarrow\max\{\,b\in G_b:\ b\le \mathrm{bay}/2.6\,\},
\label{eq:m-proj}
\end{equation}
which follows directly from the clause. Solving $l_n/(0.8\,b\,\varsigma_{\mathrm{beam}})\ge 4$ for the beam scale gives $\varsigma_{\mathrm{beam}}\le l_n/(3.2\,b)$; requiring this bound to admit the smallest scale on the menu, $\varsigma_{\mathrm{beam}}\ge 0.5$, gives $\mathrm{bay}-b\ge 1.6\,b$, i.e., $b\le\mathrm{bay}/2.6$, which caps the grid at $b_{\max}$. Both operations are analytic in $\theta$ and cost no oracle calls; because Assumption~\ref{ass:mono} is stated on the projected path, Lemma~\ref{lem:bisect} applies to the capped grid verbatim (Appendix~\ref{app:c2}). Applied post hoc to designs returned without the projection, the same clipping repairs every violation except those with $b>\mathrm{bay}/2.6$, which no beam scale can rescue.

Figure~\ref{fig:build} is the measurement that motivated placing the projection inside the loop rather than applying it after the fact. For each compared system, it reports the share of physically feasible returned designs on the 59-instance search-sufficient reference set, pooled over three seeds, that violate the clear-span-to-depth limit. The gap grows with column size and is largest for the agent without the projection, because its bisection settles on the thickest feasible column. All but three of the violations are repairable by clipping the beam scale alone; the three exceptions have $b>\mathrm{bay}/2.6$.

We analyzed two additional clause families that are not sourced. The joint rule that the beam must not be wider than the column (GB~50011-2010 Clause~6.3.2 otherwise requires flat-beam provisions) can bind only in a corner of the menu: $b_b/b\le\max(0.4,\,0.32\,\varsigma_{\mathrm{beam}})$ while the tapered top-column side is $(1-\text{taper})\,b\ge 0.6\,b$. Hence, a violation requires $0.32\,\varsigma_{\mathrm{beam}}>1-\text{taper}$, i.e., $\varsigma_{\mathrm{beam}}>1.875$ at the maximum taper of $0.40$. No returned design reaches this corner (Appendix~\ref{app:exp}), so the rule is monitored but not sourced as a constraint. The seismic-grade-dependent minimum column side (300~mm or 400~mm, GB~50011-2010 Clause~6.3.5; JGJ~3-2010 Clause~6.4.1) requires the seismic grade of GB~50011-2010 Table~6.1.2, which is not defined uniformly across the code families of the benchmark; the axial-compression-ratio limit requires member forces that the harness does not expose. Both are reported as limitations.

\begin{figure}[!t]
\centering
\includegraphics[width=0.95\textwidth]{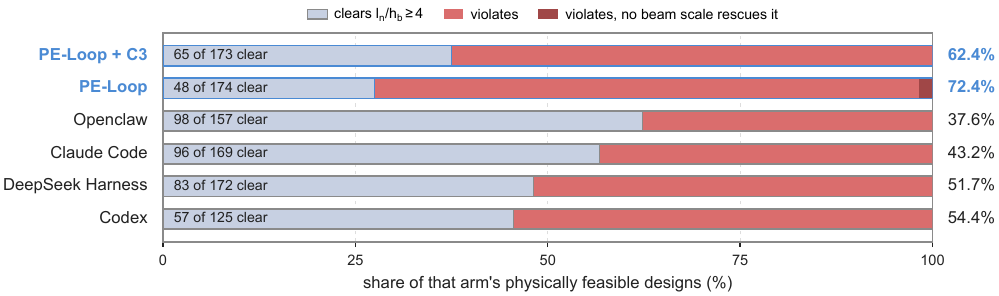}
\caption{The detailing gap is not ours alone. Each bar splits one arm's feasible designs in the 59-instance reference set by the clause $l_n/h_b\ge4$; the right rate is the violating share. Every system returns unbuildable designs (42.75--72.41\%). PE-Loop without $\Pi_{\mathrm{build}}$ is the worst, but 123 of its 126 violations are repairable by clipping one section scale, which $\Pi_{\mathrm{build}}$ does by construction (Table~\ref{tab:pibuild}). Coverage in Table~\ref{tab:gb59cov}.}
\label{fig:build}
\end{figure}

\subsection{Pseudocode}
\label{app:algo}

Algorithm~\ref{alg:peloop} gives the episode-level control flow of PE-Loop as implemented, in the notation of Sections~\ref{sec:task} and~\ref{sec:method}; the color convention matches the hatching of the method figure: \llm{blue} marks the three language-model calls, which are not charged to the budget, and \textcolor{physred}{red} marks the margin-oracle calls, each of which costs one evaluation, so a round's evaluation count can be read off the red lines directly. In the $\Pi_{\mathrm{build}}$ variant, the projection of Eq.~\ref{eq:m-proj} is applied to every candidate before evaluation; the harness-side re-verification of the returned design on the true massing $M$ (Section~\ref{sec:task}) and the offline final verification L4 (Appendix~\ref{app:l4}) are not part of the agent and are omitted.

\begin{algorithm}[!t]
\caption{PE-Loop episode. \llm{Blue}: language-model calls, not charged to $B$. \textcolor{physred}{Red}: margin-oracle calls (Eq.~\ref{eq:margin}), each charged one evaluation. In the $\Pi_{\mathrm{build}}$ variant, every oracle call is preceded by the projection of Eq.~\ref{eq:m-proj}. The harness re-verification on $M$ and the held-out L4 analyses are not part of the agent.}
\label{alg:peloop}
\begin{algorithmic}[1]
\Require instance $x=(O_\tau,S)$; grid $G_b=\{b_0<\dots<b_{44}\}$ (Eq.~\ref{eq:m-space}); action set $\mathcal{A}$; budget $B$; rounds $K$; tier $\in\{\texttt{light},\dots,\texttt{ultra}\}$
\Ensure status $o\in\{\textsc{pass},\textsc{escalated}\}$; design $\theta^\star$; margins $\boldsymbol{\rho}(\theta^\star)$; evaluations $\mathrm{NFE}$
\phase{B0--B2 \ perception and intent (Eq.~\ref{eq:m-intent})}
\State $\hat M\gets\llm{\pi_{\mathrm{perc}}(O_\tau)}$ if $\tau\neq\mathrm{text}$, else parse the brief
\State $\hat y\gets\llm{\pi_{\mathrm{int}}(\hat M,S)}$;\quad $\theta_d^{(0)}\gets$ initial state of Eq.~\ref{eq:m-intent};\quad $\mathrm{NFE}\gets 0$
\For{$k=0,1,\dots,K$}
  \phase{C2 \ minimal feasible column side under $\theta_d^{(k)}$ (Eq.~\ref{eq:m-bstar}; Lemmas~\ref{lem:bisect}--\ref{lem:cost})}
  \State $b_{\max}\gets\max\{b\in G_b:\ b\le\mathrm{bay}/2.6\}$ if $\Pi_{\mathrm{build}}$, else $b_{44}$ \Comment{$\Pi_{\mathrm{build}}$ caps the grid}
  \State\label{ln:probe} \Eval{\theta_d^{(k)},b_{\max}} \Comment{upper-bound probe}
  \If{$R(\theta_d^{(k)},b_{\max})=1$}
    \State $\ell\gets 0$;\ $h\gets\operatorname{index}(b_{\max})$
    \While{$\ell<h$ \textbf{and} $\mathrm{NFE}<B-1$} \Comment{$\le\lceil\log_2|G_b|\rceil=6$ steps}
      \State\label{ln:bisect} $m\gets\lfloor(\ell+h)/2\rfloor$;\ \Eval{\theta_d^{(k)},b_m}
      \State \textbf{if} $R(\theta_d^{(k)},b_m)=1$ \textbf{then} $h\gets m$ \textbf{else} $\ell\gets m+1$
    \EndWhile
    \State\label{ln:confirm} \Eval{\theta_d^{(k)},b_\ell} \Comment{confirmation; catches budget truncation}
    \If{$R(\theta_d^{(k)},b_\ell)=1$}
      \State $\theta^\star\gets(\theta_d^{(k)},b_\ell)$;\ $o\gets\textsc{pass}$;\ \textbf{break} \Comment{\textsc{pass} before exhaustion}
    \EndIf
  \EndIf
  \State $\mathcal{J}\gets\{\,j:\ \rho_j<0\,\}$ of the last evaluation \Comment{Eq.~\ref{eq:m-violated}; the only signal C1 receives}
  \phase{C1 \ one discrete repair, proposed by the language model (Eq.~\ref{eq:m-outer})}
  \If{$k=K$ \textbf{or} $\mathrm{NFE}\ge B$}
    \State $\theta^\star\gets(\theta_d^{(k)},b_{\max})$;\ $o\gets\textsc{escalated}$;\ \textbf{break}
  \EndIf
  \State $a_k\gets\llm{\pi_{\mathrm{C1}}(\hat M,\theta_d^{(k)},\mathcal{J})}\in\mathcal{A}$;\quad $\theta_d^{(k+1)}\gets a_k(\theta_d^{(k)})$ \Comment{clipped to admissible ranges}
\EndFor
\phase{C3 \ material loop along $\varsigma$, \textup{\texttt{ultra}} tier only (Eq.~\ref{eq:m-c3})}
\If{$o=\textsc{pass}$ \textbf{and} tier $=\texttt{ultra}$}
  \ForAll{$\theta'$ in the candidate menu, while $\mathrm{NFE}<B$}
    \State \Eval{\theta'};\ \textbf{if} $R(\theta')=1\ \wedge\ c(\theta')<c(\theta^\star)$ \textbf{then} $\theta^\star\gets\theta'$
  \EndFor
\EndIf
\If{$\mathrm{NFE}<B$}
  \State \Eval{\theta^\star} \Comment{final measurement; L4 and $\mathcal{C}$ offline (Eq.~\ref{eq:m-l4})}
\EndIf
\State \Return $o,\ \theta^\star,\ \boldsymbol{\rho}(\theta^\star),\ \mathrm{NFE}$
\end{algorithmic}
\end{algorithm}

\subsection{Correctness of C2 and its evaluation bound}
\label{app:c2}

\begin{assumption}[Monotone margin along the projected column axis]
\label{ass:mono}
For fixed $\theta_d$ and $b\le b'$ in $G_b$, $V(\Pi_{\mathrm{build}}(\theta_d,b))\le V(\Pi_{\mathrm{build}}(\theta_d,b'))$.
\end{assumption}

Assumption~\ref{ass:mono} is stated on the projected path because $\Pi_{\mathrm{build}}$ may lower the beam scale as $b$ grows. Two facts confine its role. First, the beam clip is inactive on the whole grid whenever $\varsigma_{\mathrm{beam}}\le l_n(b_{\max})/(3.2\,b_{\max})$, since $(\mathrm{bay}-b)/(3.2\,b)$ is decreasing in $b$; in that case $\varsigma$ is literally frozen and the assumption reduces to monotonicity of the unprojected margin (Lemma~\ref{lem:clip}). Second, whether or not it holds, the returned design is feasible (Lemma~\ref{lem:safe}); only minimality along $b$ can be lost.

\begin{lemma}
\label{lem:bisect}
Under Assumption~\ref{ass:mono}: (i) the feasible section set $U(\theta_d)=\{b\in G_b:\ R(\Pi_{\mathrm{build}}(\theta_d,b))=1\}$ is an up-set of $G_b$; (ii) $b^\star(\theta_d)=\min U(\theta_d)$ exists if and only if $R(\theta_d,b_{\max})=1$, where $b_{\max}$ is the upper end of the grid after the cap of Eq.~\ref{eq:m-proj} (2.50~m when the cap is inactive); (iii) bisection locates $b^\star$ with at most $\lceil\log_2|G_b|\rceil=6$ evaluations, so one C2 call uses at most $2+\lceil\log_2|G_b|\rceil=8$ evaluations (upper-bound probe, bisection, final confirmation; Algorithm~\ref{alg:peloop}, lines~\ref{ln:probe}--\ref{ln:confirm}), and one outer round, which re-verifies the returned design once more, uses at most $\lceil\log_2 45\rceil+3=9$.
\end{lemma}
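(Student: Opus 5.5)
The three parts follow in order, each resting on the one before. For (i), fix $\theta_d$ and take $b\in U(\theta_d)$ and $b'\ge b$ in $G_b$. By Eq.~\ref{eq:m-consistency}, $R=1$ is equivalent to $V\ge 0$. Assumption~\ref{ass:mono} gives $V(\Pi_{\mathrm{build}}(\theta_d,b'))\ge V(\Pi_{\mathrm{build}}(\theta_d,b))\ge 0$, so $b'\in U(\theta_d)$ and $U(\theta_d)$ is an up-set. The only care needed is with the capped grid. When $\Pi_{\mathrm{build}}$ is active, $G_b$ should be read as $\{b\in G_b: b\le b_{\max}\}$, which is again a finite chain. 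Because the assumption is stated on the projected path, the argument carries over verbatim.

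For (ii), the grid is finite and totally ordered, so $\min U(\theta_d)$ exists exactly when $U(\theta_d)\neq\emptyset$. If $R(\theta_d,b_{\max})=1$, then $b_{\max}\in U$ and the minimum exists. Conversely, if some $b\in U$, the up-set property gives $b_{\max}\ge b$ in $U$. Hence nonemptiness is equivalent to feasibility at the top element.

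For (iii), I would argue by a loop invariant on Algorithm~\ref{alg:peloop}. After a successful probe, the invariant is that $h$ indexes a feasible point and $b^\star$ has index in $[\ell,h]$. At initialization this holds because $b_{\max}$ is feasible. At each step the midpoint test preserves it by (i): if $b_m$ is feasible, then $b^\star\le b_m$, so we set $h\gets m$; if $b_m$ is infeasible, the up-set property puts $b^\star$ above $m$, so we set $\ell\gets m+1$. The interval length $h-\ell+1$ starts at most $|G_b|=45$. Each iteration reduces it to at most $\lceil\cdot/2\rceil$, so the loop ends with $\ell=h$ after at most $\lceil\log_2 45\rceil=6$ evaluations, and by the invariant that index is $b^\star$.

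The counting is then direct: one probe, at most six bisection evaluations and one confirmation, for at most $8$. The re-verification of the returned design adds one more evaluation, for a total of $9=\lceil\log_2 45\rceil+3$.

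I expect the delicate point to be the halving bound. An iteration with $h\gets m$ leaves $m-\ell+1=\lceil (h-\ell+1)/2\rceil$ points, not $\lfloor\cdot/2\rfloor$. I would check that $\lceil\cdot/2\rceil$ iterated six times from $45$ gives $23,12,6,3,2,1$, which is indeed six steps. A second, minor point concerns budget truncation. If the guard $\mathrm{NFE}<B-1$ stops the loop early, then $b_\ell$ may not equal $b^\star$. This case affects neither the bound, which only counts evaluations, nor correctness, since the confirmation evaluation catches it. Feasibility of the output in that case is deferred to Lemma~\ref{lem:safe}.
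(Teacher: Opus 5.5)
Your proposal is correct and follows the paper's proof essentially step for step. It uses the same up-set argument from Assumption~\ref{ass:mono}, reduces nonemptiness of $U(\theta_d)$ to feasibility at $b_{\max}$, and keeps the same bisection invariant (the right endpoint is feasible and every index left of $\ell$ is infeasible), then adds the probe, confirmation, and re-verification to the count. Two of your additions are slightly more careful than the paper: the explicit $\lceil n/2\rceil$ check ($45\to23\to12\to6\to3\to2\to1$) corrects the paper's looser ``at least halves'' wording, and reading $G_b$ as the capped grid makes explicit what the paper leaves implicit in (ii).
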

\begin{proof}
(i) If $b\in U$ and $b'\ge b$, then by Assumption~\ref{ass:mono} $V(\theta_d,b')\ge V(\theta_d,b)\ge 0$, hence $b'\in U$. (ii) By (i), $U\neq\varnothing$ if and only if $b_{\max}\in U$; $G_b$ is finite, so a minimum exists whenever $U$ is non-empty. (iii) Bisect the index interval $[0,h_0]$ with $b_{h_0}=b_{\max}$: if the midpoint is feasible, move the right endpoint to the midpoint, otherwise move the left endpoint to the midpoint plus one. By (i) the invariant ``the right endpoint is feasible and every index before the left endpoint is infeasible'' is preserved, the interval length at least halves at every step, and $h_0+1\le 45\le 2^{6}$ gives convergence to a single index within 6 steps. Adding the probe, the confirmation, and the final re-verification gives the stated bounds.
\end{proof}

\begin{lemma}
\label{lem:cost}
If the concrete volume $c(\theta_d,b)$ is strictly increasing in $b$, then under Assumption~\ref{ass:mono} $b^\star(\theta_d)=\operatorname*{arg\,min}\{c(\theta_d,b):\ b\in U(\theta_d)\}$, which is the second equality in Eq.~\ref{eq:m-bstar}.
\end{lemma}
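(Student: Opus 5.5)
The plan is to reduce the arg-min statement to two facts: Lemma~\ref{lem:bisect}(i), and the fact that a strictly increasing function on a finite totally ordered set attains its minimum exactly at the least element. Fix $\theta_d$ and work on the (possibly capped) grid $G_b$. If $U(\theta_d)=\varnothing$, both sides of the claimed equality are undefined, so the statement is vacuous. By Lemma~\ref{lem:bisect}(ii) this is exactly the case $R(\theta_d,b_{\max})=0$. From here on I assume $U(\theta_d)\neq\varnothing$, so that $b^\star(\theta_d)=\min U(\theta_d)$ exists.

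Next I compare $b^\star$ with any other feasible side. Take $b\in U(\theta_d)$ with $b\neq b^\star$. Since $b^\star$ is the minimum of $U$ and $G_b$ is totally ordered, $b>b^\star$. Strict monotonicity of $c$ then gives $c(\theta_d,b)>c(\theta_d,b^\star)$. Hence $b^\star$ attains the minimum of $c$ over $U$, and it is the unique minimizer. The arg-min is therefore a singleton, which justifies writing it with an equality sign. This step does not even need the up-set property; Assumption~\ref{ass:mono} enters only through Lemma~\ref{lem:bisect}, to identify $\min U$ with the bisection output and with the existence criterion.

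The one point I expect to need care is what ``$c(\theta_d,b)$'' means under the projection. In the $\Pi_{\mathrm{build}}$ variant, the realized beam scale can shrink as $b$ grows, so the hypothesis must be read as strict increase of $c(\Pi_{\mathrm{build}}(\theta_d,b))$ along the projected path. That is the same path on which $U$ is defined. I would state this reading explicitly, so that the cost and the feasible set refer to the same realized structure, and the argument above then goes through verbatim. I would add that the hypothesis is an input to the lemma rather than something it proves: the lemma does not verify that column volume dominates any beam-volume decrease from clipping.
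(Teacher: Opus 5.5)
Your proof is correct and follows the paper's argument: $b^\star=\min U(\theta_d)$, and a strictly increasing cost attains its unique minimum over $U(\theta_d)$ at the least element. Your remark that the up-set property is not needed for this step is accurate, since the paper cites it but uses only that $b^\star$ is the minimum. Reading $c$ along the projected path is a sensible clarification, not a different route.
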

\begin{proof}
$U$ is an up-set with minimum $b^\star$, and a strictly increasing function attains its minimum over $U$ at the minimum element.
\end{proof}

\begin{lemma}[Inactive clip]
\label{lem:clip}
If $\varsigma_{\mathrm{beam}}\le l_n(b_{\max})/(3.2\,b_{\max})$, then $\Pi_{\mathrm{build}}(\theta_d,b)=(\theta_d,b)$ for every $b\in G_b$ with $b\le b_{\max}$.
\end{lemma}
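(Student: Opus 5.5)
The plan is to show directly that both components of the projection in Eq.~\ref{eq:m-proj} act as the identity on every grid point $b\le b_{\max}$. The grid cap plays no role for such $b$, so only the beam clip needs work. Writing $f(b)=l_n(b)/(3.2\,b)$, the clip replaces $\varsigma_{\mathrm{beam}}$ by $\min\{\varsigma_{\mathrm{beam}},f(b)\}$. This leaves $\varsigma_{\mathrm{beam}}$ unchanged exactly when $\varsigma_{\mathrm{beam}}\le f(b)$. So it suffices to prove $\varsigma_{\mathrm{beam}}\le f(b)$ for every $b\in G_b$ with $b\le b_{\max}$.

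The key step is monotonicity of $f$. From the template in Appendix~\ref{app:build}, $l_n(b)=\mathrm{bay}-b$, where $\mathrm{bay}$ is the smaller bay width. The bay depends only on the plan dimensions $W,D$, not on $b$. Hence
\[
f(b)=\frac{\mathrm{bay}-b}{3.2\,b}=\frac{\mathrm{bay}}{3.2\,b}-\frac{1}{3.2},
\]
which is strictly decreasing in $b>0$, because $\mathrm{bay}>0$. Therefore, for every $b\le b_{\max}$ we get $f(b)\ge f(b_{\max})=l_n(b_{\max})/(3.2\,b_{\max})\ge\varsigma_{\mathrm{beam}}$, using the hypothesis at the last inequality. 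The clip is then inactive, $\varsigma$ is unchanged, and $b$ itself is untouched, so $\Pi_{\mathrm{build}}(\theta_d,b)=(\theta_d,b)$.

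The only point needing care, and what I expect to be the main obstacle, is making sure the clear span is the single quantity $\mathrm{bay}-b$ in both directions. The paper takes $l_n$ as the smaller of the two directional clear spans, $\min(\mathrm{bay}_W-b,\mathrm{bay}_D-b)=\min(\mathrm{bay}_W,\mathrm{bay}_D)-b$, so it is still of the form constant minus $b$ and the argument goes through unchanged. I would also note that any taper or other component of $\varsigma$ does not enter $f$, so the clip condition depends on $b$ only through $f$.
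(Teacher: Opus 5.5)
Your proposal is correct and follows the paper's proof: both use that $l_n(b)/(3.2\,b)$ is decreasing in $b$, so its minimum over $b\le b_{\max}$ is attained at $b_{\max}$, the hypothesis makes the beam clip the identity, and the cap leaves such $b$ untouched. You also check the monotonicity explicitly from $l_n=\mathrm{bay}-b$, including the minimum over the two plan directions, whereas the paper simply asserts it.
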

\begin{proof}
$l_n(b)/(3.2\,b)$ is decreasing in $b$, so its minimum over the capped grid is attained at $b_{\max}$; below that minimum the clip of Eq.~\ref{eq:m-proj} is the identity at every $b\le b_{\max}$, and the cap is already applied.
\end{proof}

\begin{lemma}[Feasibility without Assumption~\ref{ass:mono}]
\label{lem:safe}
Any design that C2 returns with status \textsc{pass} satisfies $R=1$, whether or not Assumption~\ref{ass:mono} holds; in the $\Pi_{\mathrm{build}}$ variant, it additionally satisfies $\min\boldsymbol{\rho}^{\mathrm{build}}\ge 0$.
\end{lemma}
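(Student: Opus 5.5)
The argument is a direct read of Algorithm~\ref{alg:peloop}, with no appeal to Assumption~\ref{ass:mono}. The key observation is that C2 can set $o\gets\textsc{pass}$ at exactly one point, line~\ref{ln:confirm}. There the confirmation evaluation \Eval{\theta_d^{(k)},b_\ell} has just been performed, and the branch is taken only if the oracle reports $R(\theta_d^{(k)},b_\ell)=1$. The returned design is then $\theta^\star=(\theta_d^{(k)},b_\ell)$, the very candidate that was evaluated.

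First, establish that the pass status is always backed by a direct oracle verdict on the returned design. By Eq.~\ref{eq:m-consistency}, $R=1$ is equivalent to $V\ge 0$, so every $g_j$ holds. The bisection invariant and the monotonicity assumption are used only to argue that $b_\ell$ is the \emph{minimal} feasible side (Lemma~\ref{lem:bisect}). Feasibility never depends on them, since the confirmation step evaluates $b_\ell$ itself rather than inferring its verdict from neighbours.

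Two edge cases need checking. The first is budget truncation. The while loop may stop with $\ell<h$, in which case $b_\ell$ may never have been probed. This is exactly why the confirmation evaluation exists: if it fails, the branch is not taken and no pass is returned. The second is the oracle's determinism. Because the oracle is deterministic, the later harness re-verification of the same $\theta^\star$ returns the same verdict. In the $\Pi_{\mathrm{build}}$ variant, the point to watch is that the evaluated candidate is the projected one, $\Pi_{\mathrm{build}}(\theta_d^{(k)},b_\ell)$, and that this projected candidate is what gets returned. I would state that the algorithm returns the projected candidate so the verdict and the returned design coincide.

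For the detailing claim, I would show that any projected candidate satisfies $l_n/h_b\ge 4$ analytically. The cap gives $b\le b_{\max}\le \mathrm{bay}/2.6$, hence $l_n=\mathrm{bay}-b\ge 1.6\,b$, so the clip bound $l_n/(3.2\,b)\ge 0.5$ stays within the menu. After the clip, $\varsigma_{\mathrm{beam}}\le l_n/(3.2\,b)$. Then $h_b=0.8\,b\,\varsigma_{\mathrm{beam}}\le l_n/4$, which gives $\rho^{\mathrm{build}}=(1/4-h_b/l_n)/(1/4)\ge 0$.

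The main obstacle is bookkeeping rather than mathematics. There are two subtleties. First, $l_n$ is the minimum over both plan directions, so the cap must use the bay that realizes that minimum; I would note that the cap and clip use the same smaller-direction bay. Second, the clip must not push $\varsigma_{\mathrm{beam}}$ below the menu floor of $0.5$, which the cap prevents as shown. Finally, I would remark that C3 in the \texttt{ultra} tier accepts only candidates with $R(\theta')=1$, and each such candidate is projected, so the conclusion also survives C3.
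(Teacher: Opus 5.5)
Your proposal is correct and follows the paper's proof. \textsc{pass} is set only after the confirmation evaluation (line~\ref{ln:confirm}) returns $R=1$ on the very design returned, so Assumption~\ref{ass:mono} is never used, and the detailing claim holds by construction of $\Pi_{\mathrm{build}}$. Your explicit chain (the cap keeps the clip bound at or above the menu floor of $0.5$, and the clip gives $h_b=0.8\,b\,\varsigma_{\mathrm{beam}}\le l_n/4$) spells out the derivation the paper gives when it introduces Eq.~\ref{eq:m-proj}.
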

\begin{proof}
C2 returns \textsc{pass} only after the confirmation evaluation (Algorithm~\ref{alg:peloop}, line~\ref{ln:confirm}) reports $R=1$ for the very design returned; in the $\Pi_{\mathrm{build}}$ variant, $\min\boldsymbol{\rho}^{\mathrm{build}}\ge0$ holds by construction of the projection.
\end{proof}

C2 proceeds in three steps (Algorithm~\ref{alg:peloop}, lines~\ref{ln:probe}--\ref{ln:confirm}). It first spends one evaluation on $(\theta_d,b_{\max})$; if $R=0$, then by Lemma~\ref{lem:bisect}(ii) $b^\star$ does not exist and the violated-constraint set $\mathcal{J}(\theta_d,b_{\max})$ is handed to C1. Otherwise, it bisects as in Lemma~\ref{lem:bisect}(iii), and finally spends one evaluation to confirm $b^\star$. The remaining budget hard-truncates the procedure; if truncation occurs and the returned section fails the confirmation, it is treated as ``$b^\star$ does not exist''. The section triple $\varsigma$ is frozen throughout the bisection (up to the beam-scale clip of $\Pi_{\mathrm{build}}$): enlarging $b$ monotonically stiffens the structure, whereas adding the beam section, concrete grade, or taper to the search would break monotonicity and multiply the evaluation count by the number of dimensions, so these three are left to the discrete decision in C1. If Assumption~\ref{ass:mono} fails on an individual instance (e.g., non-monotone behavior from higher-mode coupling), the final confirmation still guarantees that the returned $b^\star$ is feasible (Lemma~\ref{lem:safe}); only minimality is lost, and this risk is covered by the final verification and by the acceptance protocol of Section~\ref{sec:experiments}.
\subsection{C1 action set and the budgeted outer loop}
\label{app:c1}

C1 presents the language model with a geometry summary, the current $\theta_d$, and the violated-constraint set $\mathcal{J}$, and asks for one action from $\mathcal{A}$. The action set contains five action types on three levels of leverage: at the section level, adjusting the section triple $\varsigma$; at the topology level, adding a core, adding a shear wall, and adjusting the core plan geometry $\kappa$; at the system level, revising the declared lateral system in place and resetting the initial topology by Eq.~\ref{eq:m-intent}, i.e., an intent-level fallback. Actions only answer ``what to change''; out-of-range numerical parameters are clipped to their admissible ranges rather than rejected (core-to-floor area ratio $[0.01,0.30]$, plan aspect ratio $[0.2,5.0]$, plan offset $[-0.4,0.4]$ in each direction, beam-section scale $[0.5,2.0]$, concrete-grade scale $[1.0,1.30]$, taper $[0,0.40]$), so the output of C1 is always a valid point of $\Theta_d$. Adding a core or a wall increments the corresponding count; changing the lateral system resets the counts as in Eq.~\ref{eq:m-intent}; an unparsable action falls back to adding a core.

The outer loop runs as in Eq.~\ref{eq:m-outer}: round $k$ first executes C2 and passes if $b^\star$ is found; otherwise, if $k=K$ or the cumulative evaluation count satisfies $\mathrm{NFE}_k\ge B$, the episode terminates as \textsc{escalated}; otherwise, C1 is executed, and the next round begins. Fallback is thus layered: failures with the section below its upper bound are absorbed inside C2 by enlarging $b$; failures that persist at the upper bound are escalated to C1, where the language model chooses among section-, topology-, and system-level levers conditioned on $\mathcal{J}$ (for instance, torsion-governed failures favor system-level actions, but the model makes this choice and is not a hard-coded rule); \textsc{escalated} is declared only when the budget $B$ or the round limit $K$ is exhausted. This division confines the language model to the discrete semantic space where it has a comparative advantage, while the oracle still adjudicates the consequence of every decision.

\subsection{Material loop C3}
\label{app:c3}

C3 is enabled only in the \texttt{ultra} tier. Starting from the first passing design, which by Eq.~\ref{eq:m-bstar} already solves Eq.~\ref{eq:objective} along $b$, it accepts a candidate $\theta'$ iff
\begin{equation}
\theta^\star\leftarrow\theta'\quad\iff\quad R(\theta')=1\ \wedge\ c(\theta')<c(\theta^\star).
\label{eq:m-c3}
\end{equation}
To reduce $c$ along $\varsigma$, it enumerates five section-triple candidates (raising the concrete grade, introducing a vertical taper, enlarging the beam section, and their combinations), ordered by the material efficiency measured in a pilot study so that the cheapest lever is tried first, runs a coarse-to-fine column-side search around $b^\star$ for each candidate (steps of 12, 6, 3, and 1 grid cells, stopping at the first feasible reduction), re-checks every candidate with the oracle, and accepts strictly cheaper and only feasible candidates, following Eq.~\ref{eq:m-c3}. Candidates are evaluated independently from the pre-C3 pair $(\varsigma,b^\star)$ rather than cumulatively, and the cheapest feasible one is adopted (Algorithm~\ref{alg:peloop}). C3 has its own cap of 4 evaluations and is bounded by $B$; the candidate menu is deliberately small because every candidate consumes evaluations.

\subsection{Compute tiers}
\label{app:effort}

Structural design has no single correct compute budget: schematic design needs comparable options within minutes, whereas construction-document design needs a design that is right the first time. The architecture therefore does not fix the budget; it instantiates the same B0--C3 pipeline at four tiers that differ only in the budget $B$, the number of outer rounds $K$, and two lever switches: \texttt{light} $(B,K)=(5,1)$ for schematic comparison, \texttt{medium} $(10,2)$, \texttt{high} $(15,3)$ as the default and main-table setting, and \texttt{ultra} $(25,4)$ for construction documents. The C3 material loop runs only at \texttt{ultra}; the held-out L4 verification runs only at \texttt{high} and \texttt{ultra}, so we report the two lower tiers on PASS and evaluation count alone. The evaluation count includes only oracle calls; language-model and vision-model calls and the L4 final verification are not charged to $B$.

The tier values are operating points on one anytime family rather than derived quantities. The \texttt{high} tier is the main table setting: $B=15$ is the budget shared by every compared system, and $K=3$ allows one topology decision per lever level. The \texttt{light} tier sits below the cost of one complete round (9 evaluations, Lemma~\ref{lem:bisect}) and can therefore only screen feasibility; at $B=5$ the confirmation may be the last charged call, so the final measurement is skipped, and the confirmation margins are reported. \texttt{medium} affords one complete round plus a fallback; \texttt{ultra} adds one more round and the C3 budget. Section~\ref{sec:experiments} evaluates the family by a success rate that increases from \texttt{light} to the main-table \texttt{high} setting.
\subsection{Termination semantics}
\label{app:terminal}

An episode ends in one of four terminal states: \textsc{pass} (the returned design satisfies all in-loop constraints), \textsc{escalated} (honest termination after the levers or the budget are exhausted), \textsc{fail} (execution crash), and \textsc{unevaluated} (not assessable). \textsc{escalated} and \textsc{fail} are strictly separated: the former is the legitimate, scored terminal state of Section~\ref{sec:task}, in which the agent declares that no feasible design was found within the given action set and budget, rather than a feigned convergence; the latter is a system failure. \textsc{unevaluated} is further attributed to the responsible side: non-delivery, timeout, or crash on the side of the system under test is counted as infeasible and enters the denominator, whereas cases in which the harness cannot synthesize the geometry or has no sourced constraint are removed from the denominator. This attribution rule prevents the pass rates in Section~\ref{sec:experiments} from being systematically biased against any system by harness defects.

\subsection{Final verification L4}
\label{app:l4}

Dynamic response is the real loading condition, but response-history analysis under recorded ground motions is one to two orders of magnitude more expensive than elastic static analysis, and placing it inside the iteration loop would raise the cost per instance by an order of magnitude; seismic response-history analysis is therefore used only as a final check. After the equivalent-static loop converges, the returned design $\theta^\star$ undergoes one seismic response-history analysis under recorded ground motions and one full wind time-history analysis, yielding the held-out margin vector of Eq.~\ref{eq:m-l4}. Its components are the inter-story drift ratio under response history (same limit as the in-loop constraint) and the wind-induced peak acceleration (limited by occupancy: $0.15$~m/s$^2$ for residential and apartment buildings and $0.25$~m/s$^2$ for office and hotel buildings, following Clause~3.7.6 of JGJ~3~\citep{jgj3}, applied above its 150~m applicability height). Whether seismic response-history analysis is mandatory is prescribed by height, seismic intensity, and site class (Clause~5.1.2 of GB~50011~\citep{gb50011}: above 100~m for intensity~7 and for intensity~8 on site classes I--II, above 80~m for intensity~8 on site classes III--IV, above 60~m for intensity~9); our held-out verification runs the seismic response history on every returned design without applying this trigger, so that the held-out set is not systematically biased toward instances that are unlikely to fail. Because the recorded-ground-motion condition never enters $G_x$, L4 is a held-out test on a loading condition unused inside the loop; it shares the physical model with the loop and therefore cannot detect systematic errors in the model form itself (Section~\ref{sec:limitations}). The third certificate component $\boldsymbol{\rho}^{\mathrm{build}}$ is the detailing margin vector in Appendix~\ref{app:build}; for the projected agent, it is non-negative by construction, and for non-projected systems, it records the detailing gap.

\paragraph{Held-out records and scaling.}
The seismic re-check draws one record per design from the FEMA P-695 far-field subset of the record library of Appendix~\ref{app:bench}: candidates are ranked by spectral-shape misfit to the site design spectrum over $[0.2\,T_1,\,1.5\,T_1]$, the best match is taken, and it is amplitude-scaled by the geometric-mean log ratio of code to record spectral acceleration over that band, the convention of GB~50011 Clause~5.1.2, ASCE~7 Chapter~16, and EN~1998-1 Clause~3.2.3.1~\citep{gb50011,asce7,en1998}. The analysis is an elastic modal-superposition response history in one horizontal direction with 5\% Rayleigh damping anchored at $T_1$ and $T_3$, restricted to the 5--95\% Arias-intensity window; the peak inter-story drift ratio under that record is checked against the instance's drift limit. The re-check runs on every returned design, whether or not Clause~5.1.2 of GB~50011 would require a time-history analysis at that height. The wind re-check draws three of the sixteen ten-minute series of the wind library of Appendix~\ref{app:bench}, their fluctuation amplitude rescaled to the turbulence intensity of the site's terrain category, field records taken first, and applied as quasi-steady buffeting loads $F_i(t)=\bar F_i\,[v(t)/\bar v]^2$ at the code basic wind pressure. The same modal machinery with 2\% damping gives the drift envelope, checked against the static limit unless the first natural frequency exceeds 1~Hz, in which case the code static wind load governs drift, and the top-floor acceleration, converted from the fifty-year to the ten-year return pressure of GB~50009 Table~E.5~\citep{gb50009} by linear rescaling and checked against the Clause~3.7.6 limits of JGJ~3~\citep{jgj3}; for code families whose ten-year pressure is not tabulated, the acceleration check is reported as unevaluated rather than approximated.

\subsection{Architectural placement}
\label{app:graph}

From a graph-engineering perspective, the pipeline is a task graph whose nodes are deterministic physics tools and language-model calls (perception, intent parsing, topology decision, parameter search, material optimization, final verification) and whose edges represent data dependencies. The graph runs as an iterative loop rather than a one-shot orchestration. In each round, the agent decides the next action (repair or escalate) based on physics feedback rather than being forced along the graph structure. In the layered stack of graph engineering~\citep{feng2026graph,he2026awesomegraphengineering}, the system sits at the ``heavy harness, single loop'' position: the acceptance authority is built into the deterministic physics tools of the harness layer rather than delegated to a supervising model of the same origin, which is precisely the structural remedy for the verification-credibility deficit discussed in Appendix~\ref{sec:related}.

\section{Additional experimental results}
\label{app:exp}

\paragraph{Protocol.}
The 260-instance main split holds 130 topology-decision instances, 110 section-only controls, and 20 capability-boundary instances; the last are excluded from every comparative statistic and listed in Table~\ref{tab:strata} only to document that no arm solves them. Within an LLM block, every arm faces the same instances on every seed, the same margin oracle (Eq.~\ref{eq:margin}), and the same budget; each rate divides by its own denominator (below). The search-sufficient reference set (GB-59: 59 instances $\times$ 3 seeds) carries the comparisons against the LLM-free reference lines and is never pooled or compared with the main split (caveat~(ii) of Appendix~\ref{app:threats}).

\paragraph{Representativeness of the main split.}
The 130 scored instances are a stratified 37.7\% sample of the 345 topology-decision instances in the pool, the only stratum that separates systems: section-only instances are passed by 93.6--97.9\% of answers under every system through column resizing, and boundary instances by none (Table~\ref{tab:strata}). Sampling is deliberately non-uniform by height, drawing 4.8\% of the 1,239 instances below 24~m but 87.0\% of the 23 above 250~m, which is what makes the twenty boundary instances available; the split thus reflects the discriminative stratum, not the pool's height distribution. At $n{=}390$ answers per cell, the Agresti--Coull standard error \citep{agresti1998approximate} is about 2--3~pp, several times smaller than the reported effects, so enlarging the split would narrow already narrow intervals. We release the full corpus, including the 1,035 boundary instances; Tables~\ref{tab:diffmap} and~\ref{tab:strata} give the map, the quotas, and the unscored strata.

\begin{table}[!t]
\centering\small
\caption{Difficulty map of the screened corpus and the main-split quotas. Every one of the 3,216 screened instances was labeled by an exhaustive LLM-free search that tries section resizing, then up to four cores, then up to four walls, with at most 45 evaluations per stage: section-only instances are solved by resizing alone, topology-decision instances need a core or wall, and boundary instances are not solved within the action space. The last three columns give how many of each the 260-instance main split draws per height band.}
\label{tab:diffmap}
\setlength{\tabcolsep}{4pt}
\begin{tabular}{lrrrrrrrr}
\toprule
& \multicolumn{4}{c}{Screened corpus} & & \multicolumn{3}{c}{Main split} \\
\cmidrule{2-5}\cmidrule{7-9}
Height band & $n$ & Section-only & Topology-decision & Boundary & & T-dec. & Controls & Boundary \\
\midrule
$<$24~m & 1,239 & 828 & 49 & 362 & & 30 & 30 & 0 \\
24--60~m & 968 & 580 & 193 & 195 & & 65 & 35 & 0 \\
60--120~m & 615 & 357 & 98 & 160 & & 30 & 30 & 0 \\
120--250~m & 371 & 71 & 5 & 295 & & 5 & 15 & 0 \\
$>$250~m & 23 & 0 & 0 & 23 & & 0 & 0 & 20 \\
\midrule
All & 3,216 & 1,836 & 345 & 1,035 & & 130 & 110 & 20 \\
\bottomrule
\end{tabular}
\end{table}

\begin{table}[!htb]
\centering\small
\caption{Static PASS\% on the three strata of the main split under DeepSeek-V4-Flash (topology-decision column as in Table~\ref{tab:m1}; the other two strata are tabulated from seed~0 for the baselines and from all seeds for PE-Loop). Section-only controls are passed by 93.6--97.9\% of answers under every system and boundary instances by none, so only the topology-decision stratum separates systems.}
\label{tab:strata}
\begin{tabular}{lrrr}
\toprule
Arm & Topology-decision (130) & Section-only controls (110) & Boundary (20) \\
\midrule
Codex & 82.05$\pm$1.95 & 93.64 & 0.00 \\
Openclaw & 73.85$\pm$2.22 & 94.55 & 0.00 \\
DeepSeek Harness & 82.82$\pm$1.91 & 95.45 & 0.00 \\
Claude Code & 83.85$\pm$1.87 & 97.27 & 0.00 \\
\textbf{PE-Loop (ours)} & \textbf{91.28}$\pm$1.45 & \textbf{97.88} & 0.00 \\
\bottomrule
\end{tabular}
\end{table}

\paragraph{Column definitions for Table~\ref{tab:m1}.}
Columns: PASS = static pass rate on the topology-decision instances ($\uparrow$); ESC = escalation rate (diagnostic, no preferred direction); NFE@p = NFE@pass, mean evaluations of passing answers as a fraction of the budget ($\downarrow$); Exh.\ = NFE exhausted, share of answers that run the budget to zero ($\downarrow$); Wall = median wall-clock minutes, LLM gateway queueing included ($\downarrow$); T/O = timeout rate ($\downarrow$); pass$^3$ = share passing on all three seeds, the pass$^k$ reliability metric of \citet{yao2025taubench} at $k{=}3$ ($\uparrow$); Valid = validity rate ($\uparrow$); Seis.\ / Wind = held-out seismic / wind time-history re-check ($\uparrow$); Build = constructability gate ($\uparrow$); Verif.\ = Verified PASS, the four-way conjunction of PASS, Seis., Wind, and Build ($\uparrow$); P/h = passes per hour ($\uparrow$); Rank = mean, over the twelve ranked columns, of the arm's rank within its block (1 = best; ties are broken by table order; ESC is excluded) ($\downarrow$). All columns are percentages except Wall (minutes), P/h (count per hour), and Rank.

\paragraph{Systems and run conditions.}
Five arms are compared on every backbone: PE-Loop; Codex, OpenAI's \texttt{codex exec} command-line agent; Openclaw, an open-source agent harness; DeepSeek Harness, the model vendor's own scaffold (version 0.1.0-rc.7); and Claude Code (version 2.1.237), driven through an Anthropic-protocol bridge to the same backbone with two known asymmetries that disadvantage it and that we did not remove: its system prompt names a different model than the one serving it, and it truncates context to 200k tokens although the backbone offers more. The four baselines receive the identical brief, shell access, and the same deterministic margin oracle as a callable tool under the same harness-enforced budget (15 at effort \texttt{high}); they decide when to stop and what to submit, and the returned design is read from a file they write. Every episode is capped at 1,800~s, and timeouts count as failures; all arms reach one API endpoint through a shared model-level rate-limiting gateway; sampling parameters are provider defaults, since the command-line agents expose no temperature or seed control, and seeds 0--2 are replicate indices. Three cells deserve a note: under GLM~5.2~Fast, 62 completed DeepSeek Harness episodes accumulated at least 300~s of gateway queueing, inflating that cell's Wall, and seeds 1--2 of Openclaw and Claude Code ran through a relay path rather than the central gateway, with different completion rates; under Qwen3.8-Flash, the PASS of Claude Code differs by 14.62~pp between seeds 0 and 2 (2.4 SE), below the 3-SE threshold of our data gate. The campaign comprised about 15,600 agent episodes between 3 and 10 September 2026; the held-out analyses took about 430 CPU core-hours. The 372-task quarantine set and the 1,197 paired variants are reserved for contamination audits and were not used here.

\paragraph{Reading notes for Table~\ref{tab:m1}.}
Every arm answers the full set on all three seeds under every backbone ($n{=}390$ topology-decision, $n{=}720$ all, $n{=}130$ for pass$^3$), and each rate divides by the answers the arm was run on. Under GLM 5.2 Fast, Openclaw, DeepSeek Harness, and Claude Code lost 5, 14, and 12 topology-decision answers to the relay path and gateway queueing; these non-deliveries count as failures in every denominator, which disfavors these baselines and is disclosed rather than imputed. PE-Loop's exhaustion stays within 0.00--6.41\% because it escalates after the fixed probing phase rather than running the budget down; the baselines have no budget-aware escalation action, so an answer ending without a feasible submission before the budget runs out counts as ESC, and most run to the budget. Seis., Wind, and Build share the PASS denominator, so a design that is not statically PASS fails them too; Verified PASS is their conjunction with PASS. Because every arm answers the same instances on every seed, Table~\ref{tab:m1} compares arms on identical instance sets within a block; we report unpaired rates with their standard errors.

\paragraph{Why the constructability gate binds for PE-Loop.}
The cause is lever order (Appendix~\ref{app:c1}): a failing candidate is first repaired inside C2 by enlarging $b$, and only a failure at the top of the column grid escalates to a topology change in C1, so the delivered column is as thick as the section lever alone required. Since $h_b=0.8\,b\,\varsigma_{\mathrm{beam}}$ grows with $b$ while $l_n$ shrinks (Appendix~\ref{app:build}), thick-column designs violate $l_n/h_b\ge 4$ first, precisely the failure mode $\Pi_{\mathrm{build}}$ removes.

\paragraph{Where the certified margin moves.}
Verified PASS is a conjunction, so marginal gate rates do not add; we compare survival conditional on static compliance. Under DeepSeek-V4-Flash, 66.6\% of our compliant designs pass the held-out seismic check and 76.7\% the wind check (70.0\% and 76.1\% for Claude Code), while 50.8\% clear the detailing clause (62.1\%). The certified lead of 4.61~pp is therefore what survives of a 7.43~pp raw lead after a constructability deficit that $\Pi_{\mathrm{build}}$ removes (Table~\ref{tab:pibuild}). Figure~\ref{fig:gates-all} shows the same construction for every block and system.

\paragraph{Detailing gap by column size.}
The gap grows with column size and is largest for PE-Loop without $\Pi_{\mathrm{build}}$, whose outer loop pulls the section lever first (median column side 1.275~m against 0.950--1.200~m for the baselines). Of its 126 violations, 123 (97.6\%) are repairable by clipping the beam-depth scale alone; the other 3 (2.4\%) are topology-level cases with $b>\mathrm{bay}/2.6$. Conditioned on static compliance (Build\% over PASS\% in Table~\ref{tab:m1}, DeepSeek-V4-Flash), the gate is harshest on us: 50.84\% of our compliant designs clear the clause, against 61.88--65.33\% for the four baselines; with $\Pi_{\mathrm{build}}$ in the loop the rate is 100\% by construction (Table~\ref{tab:pibuild}).

\begin{table}[!t]
\centering\small
\caption{Grounding ablation under DeepSeek-V4-Flash: same loop, same budget, same 60 pre-registered topology-decision instances $\times$ 3 seeds; only the gate implementation varies. The ablation is a separate run from Table~\ref{tab:m1}; on the same 180 answers, the main-table run of PE-Loop passes 81.67\% (147/180), within sampling variance of the oracle arm here ($z\approx0.92$). These instances are harder than the full stratum: they hold 33 of the 34 failures of that run. The judge receives the brief, the code limits, and the design parameters, never a simulator quantity. Verified PASS is the harness's deterministic physics check on the submitted design, identical for every arm; for the oracle arm it coincides with the in-loop gate by construction and serves as the lossless reference. Whether the in-loop constraint set is itself sufficient is a separate question, answered by the held-out columns of Table~\ref{tab:m1}; the judge arms were not re-checked under held-out loading.}
\label{tab:m3}
\setlength{\tabcolsep}{3pt}
\resizebox{\textwidth}{!}{%
\begin{tabular}{lrrrrrr}
\toprule
Gate implementation & Self-rep.\ PASS\% & Verified PASS\% $\uparrow$ & False-pos.\% $\downarrow$ & Wasted NFE\% $\downarrow$ & Jaccard $\uparrow$ & Wall s \\
\midrule
Margin oracle (ours) & 77.78\% & \textbf{77.78\%} & \textbf{0.00\%} & \textbf{0.00\%} & \textbf{1.00} & \textbf{110.8} \\
LLM-judge (same LLM) & 49.44\% & 20.56\% & 58.43\% & 44.41\% & 0.31 & 238.4 \\
LLM-judge (DeepSeek-V4-Pro) & 40.00\% & 17.22\% & 56.94\% & 45.19\% & 0.29 & 342.8 \\
\bottomrule
\end{tabular}
}
\end{table}

\begin{table}[!t]
\centering\small
\caption{Constructability projection ablation (main split, topology-decision instances, DeepSeek-V4-Flash). Raw PASS\% drops by 29.74~pp; certified success rises by 9.74~pp. With $\Pi_{\mathrm{build}}$, every statically compliant design satisfies the detailing clause by construction, so Build\% equals PASS\%, and the detailing gap is zero.}
\label{tab:pibuild}
\setlength{\tabcolsep}{3pt}
\resizebox{\textwidth}{!}{%
\begin{tabular}{lrrrrrrr}
\toprule
Arm & PASS\% & Concrete m$^3$ & Build\% & Detailing gap (pp) & THA seis.\% & THA wind\% & Verified PASS\% \\
\midrule
PE-Loop & 91.28\% & 5,610.1 & 46.41\% & $+44.87$ & 60.77\% & 70.00\% & 35.64\% \\
PE-Loop $+\ \Pi_{\mathrm{build}}$ & 61.54\% & 5,032.0 & 61.54\% & $+0.00$ & 51.79\% & 53.59\% & \textbf{45.38\%} \\
\bottomrule
\end{tabular}
}
\end{table}

\paragraph{Scope of the projection ablation.}
Three limits apply to Table~\ref{tab:pibuild}: this is the main split, not the 59-instance reference set; one LLM only (DeepSeek-V4-Flash); and it is scored on the 130 topology-decision instances, while the unprojected arm answers the full 260, so it is not a full-coverage control.

\paragraph{Caliber statements.}
On the search-sufficient reference set, PE-Loop returns 175 answers (174 feasible) and Openclaw, DeepSeek Harness, and Claude Code 176 each, all on 59 instances; Codex returns 149 answers on 51 instances (131 feasible). Cross-system claims use the 51-instance intersection, and paired tests against search use each arm's own coverage (Table~\ref{tab:gb59cov}). Checks whose code limit could not be sourced, or whose quantity the agent never reported, are \textsc{unevaluated} and leave the denominator rather than counting as failures. Two criteria carry no discriminative signal and are not discussed in Section~\ref{sec:experiments}: intent-consistency saturates at 98.66--100.00\% (range 1.34~pp), and the beam-width rule binds only for $\varsigma_{\mathrm{beam}}>1.875$ with taper $>1-0.32\,\varsigma_{\mathrm{beam}}$, which no returned design reaches.

\begin{table}[!htb]
\centering\small
\caption{Coverage on the search-sufficient reference set (59 instances $\times$ 3 seeds; 177 answers maximum). The last column is the count and within-arm share of feasible answers that violate the clear-span-to-depth clause (Fig.~\ref{fig:build}).}
\label{tab:gb59cov}
\begin{tabular}{lrrrr}
\toprule
Arm & Instances & Answers & Feasible & Violating $l_n/h_b\ge4$ \\
\midrule
Codex & 51 & 149 & 131 & 56 (42.75\%) \\
Openclaw & 59 & 176 & 170 & 85 (50.00\%) \\
DeepSeek Harness & 59 & 176 & 169 & 86 (50.89\%) \\
Claude Code & 59 & 176 & 171 & 76 (44.44\%) \\
PE-Loop w/o $\Pi_{\mathrm{build}}$ & 59 & 175 & 174 & 126 (72.41\%) \\
\bottomrule
\end{tabular}
\end{table}

\subsection{The search-sufficient reference set: 59 instances}

\paragraph{Paired significance and failure modes.}
Against LLM-free exhaustive search (exact McNemar test on feasibility with Holm correction \citep{holm1979simple}, $N{=}59$), only Codex differs significantly: 18 instances that search solves and Codex does not, and none the other way ($p_{\mathrm{Holm}}=1.14\times10^{-4}$). Openclaw, PE-Loop, DeepSeek Harness, and Claude Code have 2, 0, 0, and 1 discordant instances, respectively ($p_{\mathrm{Holm}}=1$); $p=1$ with no discordant pair is agreement instance by instance, not evidence of equivalence, because the paired test's denominator is the discordant set. Per criterion, with \textsc{unevaluated} checks excluded from the denominator, the failures of Codex concentrate on story drift (12.08\%, 18/149), stability (10.74\%, 16/149), and the base-shear floor (8.05\%, 12/149), with the story-stiffness ratio at 1.34\%; every other system fails only the story-stiffness ratio (1.69--1.74\%, and never Openclaw) and, at most once each, the period ratio; torsion and wind acceleration are never failed by any system.

\subsection{Effort tiers and the anytime contract}

All numbers here are seed-0 values on the 130 topology-decision instances (Table~\ref{tab:m1} pools three seeds, hence its different Wall of 1.6~min). Scoring the four tiers of Appendix~\ref{app:effort} on these identical instances removes the set-composition confound; Fig.~\ref{fig:results}d plots PASS against $B$ with NFE@pass under each point. Wall-clock is not monotone across tiers (175 / 183 / 104 / 258~s, seed-0 median) because it includes LLM gateway queueing, and neither is success: \texttt{ultra} ($B{=}25$) returns 81.54\% against 91.54\% (119 of 130) at \texttt{high} ($B{=}15$), because the extra rounds let C1 revise the topology of instances the section lever had already settled, not always for the better.

\subsection{Denominator disclosure}

Every rate divides by the answers each arm was run on: $720$ per arm and backbone ($260-20$ instances $\times$ 3 seeds) and $180$ per arm of the grounding ablation. The 31 GLM-block non-deliveries count as failures and stay in the denominator; otherwise the effective denominator equals the answer count in every row. Agent-side \textsc{unevaluated} answers (at most eleven, for Codex; four for Openclaw, one each for DeepSeek Harness and Claude Code, and three for the stronger judge) also stay in the denominator, as do all timeouts. Rate-limit events were retried rather than dropped, so they do not reduce the denominator, but they are uneven: $0$ for PE-Loop, $34$ for Openclaw, $90$ for Codex, $134$ for Claude Code, and $297$ for DeepSeek Harness; three of the four baselines thus spent a substantial share of their runs waiting on a gateway, which the Wall column of Table~\ref{tab:m1} includes.

\subsection{Threats to validity of these numbers}
\label{app:threats}

Three caveats apply; none can be resolved with more seeds. (i)~For most instances, the plan footprint behind the column grid is back-derived from floor area at a fixed aspect ratio of 1.3 (\texttt{geom\_source} = \texttt{from\_area}), biasing the bay width and, with it, the detailing margin $\rho$ in Fig.~\ref{fig:build} for multi-block massings. (ii)~The main split and the reference set differ in instances, seeds, and run directories, so their numbers are never comparable. (iii)~In every block, the cross-arm PASS\% range exceeds the pre-registered 15~pp separation (narrowest: 17.43~pp, DeepSeek-V4-Flash and GLM 5.2 Fast), but our strongest evidence is the four-LLM grid, held-out columns, and paired tests.

\end{document}

